\newcommand{\ignore}[1]{}
\def\ba#1\ea{\begin{align}#1\end{align}}
\def\bg#1\eg{\begin{gather}#1\end{gather}}
\def\bm#1\em{\begin{multline}#1\end{multline}}
\def\bmd#1\emd{\begin{multlined}#1\end{multlined}}
\newcommand{\be}{\begin{equation}}
\newcommand{\ee}{\end{equation}}
\newcommand{\bea}{\begin{eqnarray}}
\newcommand{\eea}{\end{eqnarray}}
\newcommand{\matleft}{\left(\begin{array}}
\newcommand{\matright}{\end{array}\right)}
\newcommand{\ZS}[1]{\textcolor{magenta}{\it [ZS: #1]}}
\newcommand{\eqnref}[1]{Eq.~(\ref{#1})}
\def\simge{
    \mathrel{\rlap{\raise 0.511ex 
        \hbox{$>$}}{\lower 0.511ex \hbox{$\sim$}}}}
\def\simle{
    \mathrel{\rlap{\raise 0.511ex 
        \hbox{$<$}}{\lower 0.511ex \hbox{$\sim$}}}}
\renewcommand\section{\@startsection {section}{1}{\z@}%
                                 {-3.5ex \@plus -1ex \@minus -.2ex}
                                   {2.3ex \@plus.2ex}%
                                   {\normalfont\large\bfseries}}
\renewcommand\subsection{\@startsection{subsection}{2}{\z@}%
                                   {-3.25ex\@plus -1ex \@minus -.2ex}%
                                     {1.5ex \@plus .2ex}%
                                     {\normalfont\bfseries}}
\renewcommand\subsubsection{\@startsection{subsubsection}{3}{\z@}%
                                   {-3.25ex\@plus -1ex \@minus -.2ex}%
                                     {1.5ex \@plus .2ex}%
                                     {\normalfont\itshape}}
\def\pplogo{\vbox{\kern-\headheight\kern -29pt
\halign{##&##\hfil\cr&{\ppnumber}\cr\rule{0pt}{2.5ex}&\ppdate\cr}}}
\def\ps@firstpage{\ps@empty \def\@oddhead{\hss\pplogo}%
  \let\@evenhead\@oddhead 
}
\def\maketitle{\par
 \begingroup
 \def\thefootnote{\fnsymbol{footnote}}
 \def\@makefnmark{\hbox{$^{\@thefnmark}$\hss}}
 \if@twocolumn
 \twocolumn[\@maketitle]
 \else \newpage
 \global\@topnum\z@ \@maketitle \fi\thispagestyle{firstpage}\@thanks
 \endgroup
 \setcounter{footnote}{0}
 \let\maketitle\relax
 \let\@maketitle\relax
 \gdef\@thanks{}\gdef\@author{}\gdef\@title{}\let\thanks\relax}
\numberwithin{equation}{section}
\newcommand*\samethanks[1][\value{footnote}]{\footnotemark}
\def\ba#1\ea{\begin{align}#1\end{align}}
\def\bg#1\eg{\begin{gather}#1\end{gather}}
\def\bm#1\em{\begin{multline}#1\end{multline}}
\def\bmd#1\emd{\begin{multlined}#1\end{multlined}}
\def\simge{
    \mathrel{\rlap{\raise 0.511ex 
        \hbox{$>$}}{\lower 0.511ex \hbox{$\sim$}}}}
\def\simle{
    \mathrel{\rlap{\raise 0.511ex 
        \hbox{$<$}}{\lower 0.511ex \hbox{$\sim$}}}}
\renewcommand\section{\@startsection {section}{1}{\z@}%
                                 {-3.5ex \@plus -1ex \@minus -.2ex}
                                   {2.3ex \@plus.2ex}%
                                   {\normalfont\large\bfseries}}
\renewcommand\subsection{\@startsection{subsection}{2}{\z@}%
                                   {-3.25ex\@plus -1ex \@minus -.2ex}%
                                     {1.5ex \@plus .2ex}%
                                     {\normalfont\bfseries}}
\renewcommand\subsubsection{\@startsection{subsubsection}{3}{\z@}%
                                   {-3.25ex\@plus -1ex \@minus -.2ex}%
                                     {1.5ex \@plus .2ex}%
                                     {\normalfont\itshape}}
\def\pplogo{\vbox{\kern-\headheight\kern -29pt
\halign{##&##\hfil\cr&{\ppnumber}\cr\rule{0pt}{2.5ex}&\ppdate\cr}}}
\def\ps@firstpage{\ps@empty \def\@oddhead{\hss\pplogo}%
  \let\@evenhead\@oddhead 
}
\def\maketitle{\par
 \begingroup
 \def\thefootnote{\fnsymbol{footnote}}
 \def\@makefnmark{\hbox{$^{\@thefnmark}$\hss}}
 \if@twocolumn
 \twocolumn[\@maketitle]
 \else \newpage
 \global\@topnum\z@ \@maketitle \fi\thispagestyle{firstpage}\@thanks
 \endgroup
 \setcounter{footnote}{0}
 \let\maketitle\relax
 \let\@maketitle\relax
 \gdef\@thanks{}\gdef\@author{}\gdef\@title{}\let\thanks\relax}
\numberwithin{equation}{section}
 \newcommand\beal{\begin{equation}\begin{aligned}}
\newcommand\eeal{\end{aligned}\end{equation}}
\newtheorem{theorem}{Theorem}[section]
\newtheorem{definition}{Definition}[section]
\newtheorem{lemma}[definition]{Lemma}
\newtheorem{corollary}[definition]{Corollary}
\begin{document}

\normalem

\setcounter{page}0
\def\ppnumber{\vbox{\baselineskip14pt
}}

\def\ppdate{
} \date{\today}

\title{Bounds on Renyi entropy growth \\in many-body quantum systems}
\author{Zhengyan Darius Shi}
\affil{\it \small Department of Physics, Massachusetts Institute of Technology, Cambridge, MA 02139, USA}

\maketitle
\begin{abstract}
    We prove rigorous bounds on the growth of $\alpha$-Renyi entropies $S_{\alpha}(t)$ (the Von Neumann entropy being the special case $\alpha = 1$) associated with any subsystem $A$ of a general lattice quantum many-body system with finite onsite Hilbert space dimension. For completely non-local Hamiltonians, we show that the instantaneous growth rates $|S'_{\alpha}(t)|$ (with $\alpha \neq 1$) can be exponentially larger than $|S'_1(t)|$ as a function of the subsystem size $|A|$. For $D$-dimensional systems with geometric locality, we prove bounds on $|S'_{\alpha}(t)|$ that depend on the decay rate of interactions with distance. When $\alpha = 1$, the bound is $|A|$-independent for all power-law decaying interactions $V(r) \sim r^{-w}$ with $w > 2D+1$. But for $\alpha > 1$, the bound is $|A|$-independent only when the interactions are finite-range or decay faster than $V(r) \sim e^{- c\, r^D}$ for some $c$ depending on the local Hilbert space dimension. Using similar arguments, we also prove bounds on $k$-local systems with or without geometric locality. A central theme of this work is that the value of $\alpha$ strongly influences the interplay between locality and entanglement growth. In other words, the Von Neumann entropy and the $\alpha$-Renyi entropies cannot be regarded as proxies for each other in studies of entanglement dynamics. We compare these bounds with analytic and numerical results on Hamiltonians with varying degrees of locality and find concrete examples that almost saturate the bound for non-local dynamics. 
\end{abstract}

\pagebreak
{
\hypersetup{linkcolor=black}
\tableofcontents
}
\pagebreak

\maketitle

\section{Introduction}\label{sec:intro}

Spatial locality is a basic property of most physical models for many-body quantum dynamics. Attempts to quantify the spread of quantum information under spatially local interactions have motivated many breakthroughs in the field. An influential example is the Lieb-Robinson bound, a rigorous speed limit $v_{LR}$ for the ballistic spread of local disturbances, as measured by the increasing norm of commutators between spatially distant operators under Heisenberg evolution\cite{Lieb_Robinson_1972, Nachtergaele_Sims_2006, Chen_Lucas_2019, Tran_Chen_Ehrenberg_Guo_Deshpande_Hong_Gong_Gorshkov_Lucas_2020, Else_Machado_Nayak_Yao_2020}. Besides its conceptual elegance, this speed limit also plays an essential role in applications ranging from correlation decay in many-body ground states and stability of topological order to efficient quantum simulation and bounds on diffusive transport\cite{Hastings_Koma_2006, Hastings_2010, Bravyi_Hastings_Michalakis_2010, Han_Hartnoll_2018,Tran_Guo_Su_Garrison_Eldredge_Foss-Feig_Childs_Gorshkov_2019}. 

The universal nature of Lieb-Robinson bounds comes at a price: because the velocity scale $v_{LR}$ only depends on the microscopic couplings, it does not distinguish between different physical processes and different initial states. Hence, in order to understand more refined consequences of locality, it is important to consider proxies of information propagation beyond the commutator norm. The information measure that we consider in this paper is the dynamics of entanglement entropy between complementary subsystems. For a wide class of models, in the absence of many-body localization, the growth of Von Neumann entropy across any cut has been found to follow a linear profile until saturating to the equilibrium value if the system is initialized in an unentangled state. In the pioneering work of Calabrese and Cardy in \cite{Calabrese_Cardy_2005}, this phenomenon was first proven analytically for the integrable one-dimensional transverse field Ising model and for 1+1D conformal field theories more generally. This initial work was extended to arbitrary rational CFTs in \cite{Asplund_Bernamonti_Galli_Hartman_2015}, to higher-dimensional free scalar field theories in \cite{Cotler_Hertzberg_Mezei_Mueller_2016}, and to generic integrable systems in \cite{Alba_Calabrese_2018}. Meanwhile, numerical studies have found a similar linear growth for non-integrable systems including the XXZ Heisenberg spin chain and the mixed field Ising model \cite{Chiara_Montangero_Calabrese_Fazio_2006,Kim_Huse_2013}. Exact analytic calculations for non-integrable systems have been mostly out of reach, with the exception of strongly interacting holographic theories analyzed in \cite{Hubeny_Rangamani_Takayanagi_2007, Abajo-Arrastia_Aparicio_Lopez_2010, Albash_Johnson_2011, Allais_Tonni_2012, Hartman_Maldacena_2013, Liu_Suh_2014a, Liu_Suh_2014b, Leichenauer_Moosa_2015, Casini_Liu_Mezei_2016} and local random unitary circuits studied in \cite{Nahum_Ruhman_Vijay_Haah_2017, Zhou_Nahum_2019, Keyserlingk_Rakovszky_Pollmann_Sondhi_2018}. The linear growth of Von Neumann entropy appears to be a robust feature across this wide range of models. 

One interesting approach to understanding this linear profile is to prove universal upper bounds on the rate of entanglement growth. Such bounds do not constrain the functional form of the growth curve, but they guarantee that starting from an initial weakly entangled state, the time it takes to increase the entanglement entropies for any bipartition of the system by $\delta S$ is at least linear in $\delta S$. If the intuition of ballistic information propagation is correct, then for a subsystem $A$ with size $|A|$, the growth rate should be proportional to the size of the boundary $|\partial A|$ for geometrically local Hamiltonians, and at most $\mathcal{O}(|A|)$ for non-local Hamiltonians. As shown in Refs.~\cite{Bravyi_2007,Acoleyen_Marien_Verstraete_2013,Marien_Audenaert_Acoleyen_Verstraete_2016}, these expectations are indeed correct for the growth of Von Neumann entropies. However, very little is known about analogous bounds for Renyi entropies. A recent work by Vershynina~\cite{Vershynina_2019} proved some bounds on Renyi entropy growth rates in the context of non-local Hamiltonian dynamics. However, these bounds depend sensitively on the spectrum of the reduced density matrix $\rho_A(t)$ and are hence non-universal. As we will discuss in Section~\ref{sec:new_bound}, this state-dependence also makes it difficult to extend these bounds to Hamiltonians with geometric locality/$k$-locality. 

We are therefore confronted with a conundrum. On the one hand, the membrane picture of entanglement growth suggests that $S_{\alpha > 1}(t)$ and $S_1(t)$ should be qualitatively similar under generic non-integrable local Hamiltonian evolution~\cite{Nahum_Ruhman_Vijay_Haah_2017,jonay2018_membrane,Zhou_Nahum_2019,ZhouNahum2020_membrane_chaotic}. On the other hand, existing techniques place much stronger constraints on $|S'_1(t)|$ than on $S'_{\alpha > 1}(t)$. A natural question thus arises: is there any intrinsic difference between the dynamics of these two types of entropy measures in arbitrary many-body systems? In this work, we prove a series of upper bounds on $|S'_{\alpha}(t)|$ in favor of such a distinction. These bounds remove the state-dependence of the bounds in Ref.~\cite{Vershynina_2019} and generalize easily to Hamiltonians with varying degrees of locality, as we now sketch: \newline

\noindent \textbf{Main result} (sketch): Take an arbitrary initial state $\rho$ on a Hilbert space $(\mathbb{C}^{d_0})^{\otimes V}$ where $V$ is the size of the $D$-dimensional lattice and $d_0$ is the onsite Hilbert space dimension. Given any bipartition of the lattice into $A, \bar A$ with $|A|+|\bar A| = V$, let $V_{\rm min} = \min(|A|, |\bar A|)$, and let $S_{\alpha}(t)$ be the Renyi entropy of the reduced density matrix $\Tr_{\bar A} \rho(t)$ under time evolution generated by a Hamiltonian $H$. Then:
\begin{enumerate}
    \item (Lemma~\ref{lemma:alpha>1} + Lemma~\ref{lemma:stronger_comparableAB}) If $H$ is arbitrary, then $|S'_{\alpha > 1}(t)| \lesssim \text{Poly}\left(||H|| d_0^{V_{\rm min}}\right)$ while $|S'_1(t)| \lesssim \mathcal{O}(||H|| V_{\rm min} \log d_0 )$. The exponential separation between $S'_{\alpha > 1}(t)$ and $S'_1(t)$ \textit{can be saturated}.
    \item (Theorem~\ref{thm:Bound}) If $H$ is geometrically local with power-law decaying interactions $V(r) \sim r^{-w}$ satisfying $w > 2D+1$, then $|S'_1(t)| \leq c \bar h|\partial A|\log d_0$ where $c$ is a constant, $\bar h$ measures the norm of local terms in the Hamiltonian and $|\partial A|$ is the size of the boundary of $A$. We \textit{could not} prove a bound on $|S'_{\alpha>1}(t)|$.
    \item (Theorem~\ref{thm:Bound}) If $H$ is strictly local with interaction range $R$, then $|S'_{\alpha > 1}(t)| \leq b_{\alpha}(R) \bar h |\partial A|$ where $b_{1}(R) \propto R^{D+1}$ and $b_{\alpha > 1}(R) \sim R d_0^{\mathcal{O}(R^D)}$. 
    \item (Corollary~\ref{cor:Bound_klocal}) If $H$ is not geometrically local but $k$-local (i.e. each interaction term only contains $k$ sites), then $|S'_{\alpha>1}(t)| \leq b_{\alpha}(k) ||H_{\partial}||$ where $b_{1}(k) \propto k$, $b_{\alpha > 1}(k) \sim d_0^{k-1}$, and $H_{\partial}$ is the part of the Hamiltonian that couples $A$ and $\bar A$. 
    \item (Corollary~\ref{cor:Bound_klocal+geolocal}) If $H$ is both $k$-local and geometrically local with power-law decaying interactions $V(r) \sim r^{-w}$ satisfying $w > Dk$, then for all $\alpha > 1$, $|S'_{\alpha}(t)| \leq b_{\alpha}(k) \bar h |\partial A| \log d_0$ where $b_{\alpha}(k)$ has the same scaling as in part 4. 
\end{enumerate}
The structure of these bounds highlights conceptual distinctions between $\alpha$-Renyi entropies and Von Neumann entropy: despite the fact that they follow similar growth profiles in most examples known to date, the general bounds that we are able to prove for $S'_{\alpha \geq 1}(t)$ are much weaker than the corresponding bounds for $S'_1(t)$ under identical locality assumptions. We emphasize however that with the exception of (1), it is \textit{not known whether these bounds can be saturated}. We hope the question of saturation will be more extensively studied in the future. 

The rest of the paper will be organized as follows. Section~\ref{subsec:bounds_history} provides a summary of known theorems on entanglement growth bounds for general (potentially non-local) Hamiltonians \cite{Bennett_Harrow_Leung_Smolin_2003, Bravyi_2007, Acoleyen_Marien_Verstraete_2013}. We will not discuss the proofs of these theorems but instead introduce a family of operator norm inequalities (first conjectured in \cite{Audenaert_Kittaneh_2012} and partially proven in \cite{Vershynina_2019}) that play an essential role in the proofs \cite{Audenaert_Kittaneh_2012, Lieb_Vershynina_2013, Acoleyen_Marien_Verstraete_2013, Vershynina_2019}. These operator norm inequalities directly give rise to state-dependent bounds on $|S'_{\alpha \geq 1}(t)|$ that appeared in Ref.~\cite{Vershynina_2019}. We continue to work with non-local Hamiltonians in Section~\ref{subsec:improve_vershynina} and show how to improve the bounds in Ref.~\cite{Vershynina_2019} by removing their state-dependence (this is part 1 of the main result). In Section~\ref{sec:randomGUE_example}, these bounds are then shown to be almost saturated by a random GUE Hamiltonian\footnote{The meaning of ``almost saturated" will be made precise later.}. Starting from Section~\ref{sec:new_bound}, we add additional structures to the Hamiltonian and prove bounds on $|S'_{\alpha}(t)|$ for systems with geometric and/or $k$-locality (parts 2-5 of the main result), emphasizing the interplay between locality and Renyi index $\alpha$. This abstract discussion is complemented by a direct comparison of the bounds with a few examples in Section~\ref{sec:new_bound_example}. We conclude in Section~\ref{sec:discussion} with a conceptual summary and some open questions.

\section{Entanglement growth bounds for non-local Hamiltonians}

Consider a many-body system decomposed into four distinct parts $aABb$ and a Hamiltonian $H = I_a \otimes H_{AB} \otimes I_b$ acting on the full Hilbert space. For any initial density matrix $\rho(0)$, we consider the family of $\alpha$-Renyi entropies $S_{\alpha}$ and $q$-Tsallis entropies $T_q$ associated with the subsystem $aA$ (since the state is pure, this is equivalent to the entropy associated with the subsystem $Bb$)
\begin{equation}
    S_{\alpha}(t) = \frac{1}{1-\alpha} \log \Tr_{aA} [\Tr_{Bb} \rho(t)]^{\alpha} \,,\quad T_q(t) = \frac{1}{q-1} \left(1 - \Tr_{aA} [\Tr_{Bb} \rho(t)]^q \right) \,.
\end{equation}
The basic goal is to bound $|S_{\alpha}'(t)|$ and $|T_q'(t)|$. From now on, we refer to $AB$ as the interacting subsystem and $ab$ as ancillas. A concrete realization of this situation would be a quantum processor $aA$ interacting with its environment $Bb$. $AB$ contain the degrees of freedom near the environment-processor interface, and $ab$ contain the degrees of freedom farther away. The spread of entanglement can then be viewed as the spread of quantum noise from the environment into the computational qubits. With this setup in mind, we turn to a brief summary of existing bounds in the literature. We focus on $\alpha$-Renyi entropies, anticipating that analogous results for $q$-Tsallis entropies follow as simple corollaries.

\subsection{A brief history of entanglement growth bounds}\label{subsec:bounds_history}

The earliest works on this subject considered the simplest scenario where $ab = \emptyset$, and $A,B$ are two qubits. For an arbitrary Hamiltonian acting on $AB$, \cite{Dur_Vidal_Cirac_Linden_Popescu_2001,Childs_Leung_Verstraete_Vidal_2003} showed that $|S'_1(t)| \leq \beta ||H||$ where $\beta \approx 1.9123$ is a system-independent constant and $||\cdot||$ is the operator norm. This result was soon generalized to qudit systems by Bravyi, giving a bound $|S'_1(t)| \leq c ||H|| \log d$ where $d = \min\{d_A, d_B\}$ and $d_A,d_B$ are the local Hilbert space dimensions of $A, B$ respectively\cite{Bravyi_2007}. 

In the presence of ancillas, the story is richer. Since $H$ does not couple $AB$ to $ab$, one might expect at first that the bound should not depend on the properties of the ancillas. However, any amount of intrinsic entanglement between $a,A$ and between $B,b$ present in the initial density matrix $\rho(0)$ can act as a catalyst for entanglement growth across the boundary between $aA$ and $Bb$. Despite this subtlety, ancilla-independent bounds have been established and improved upon over the past decade. The arguments in \cite{Bravyi_2007} already implied the loose bound $|S'_1(t)| \leq c ||H|| d^4$. The RHS of the bound was subsequently improved to $c ||H|| d$ in \cite{Lieb_Vershynina_2013} and to $c ||H|| \log d$ in \cite{Acoleyen_Marien_Verstraete_2013}. Though the $\log d$ scaling is known to be optimal, the constant prefactor has been improved from 18 to 2 by \cite{Audenaert_2014, Ning2016}. Up to this point, all the bounds are for Von Neumann entropies in closed quantum systems. One natural question to ask is the existence and scaling of such bounds for more general entropy measures, such as the $\alpha$-Renyi entropies and $q$-Tsallis entropies introduced before. This question was taken up and partially resolved by Vershynina in \cite{Vershynina_2019}. To get a concrete understanding of this work, we first simplify $S'_{\alpha}(t)$ as
\begin{equation}\label{eq:relating_timederivative_to_norm} 
    \begin{aligned}
    S'_{\alpha}(t) &= -\frac{\alpha i}{\alpha-1} \frac{\Tr_{aA} \rho_{aA}(t)^{\alpha-1} \Tr_{Bb} [H, \rho(t)]}{\Tr_{aA} \rho_{aA}(t)^{\alpha}} \\
    &= -\frac{\alpha i}{\alpha-1} \frac{\Tr_{aABb} H [\rho(t), \rho_{aA}(t)^{\alpha-1} \otimes I_{Bb}]}{\Tr_{aA} \rho_{aA}(t)^{\alpha}} \\
    &= -\frac{\alpha i}{\alpha-1} \frac{\Tr_{aAB} H [\rho_{aAB}(t), \rho_{aA}(t)^{\alpha-1} \otimes I_{B}]}{\Tr_{aA} \rho_{aA}(t)^{\alpha}}\,,
    \end{aligned}
\end{equation}
where in the last line, we used the fact that $H = I_a \otimes H_{AB} \otimes I_b$ to trace out the ancilla subsystem $b$. If we rescale $H$ by its operator norm $||H||$, then the numerator of the above expression becomes bounded by the Schatten 1-norm $||[\rho_{aAB}(t), \rho_{aA}(t)^{\alpha-1} \otimes I_{B}]||_1$ of an operator on the full Hilbert space\footnote{The Schatten 1-norm is sometimes also referred to as the nuclear norm/trace norm.}. Therefore, bounding entanglement growth is essentially equivalent to bounding this Schatten 1-norm under suitable assumptions on the density matrices. We now state the most general version of such a bound proven by Vershynina in \cite{Vershynina_2019}:
\begin{lemma}\label{lemma:Vershynina_ineq}
    (Vershynina 2019): Consider a general function $f: [0,1] \rightarrow \mathbb{R}$ satisfying the properties below: 
    \begin{enumerate}
        \item f increases monotonically.
        \item For $0 < y < x \leq 1$ and for some particular $0<p<1$, 
        \begin{equation}
            x^{-1/2} y^{1/2} \left[f(x) - f(y)\right] \leq p^{1/2} \left[f(1) - f(p)\right] \quad \text{if } f(x) - f(y) > f(1) - f(p) \,.
        \end{equation}
    \end{enumerate}
    Then we the following bound holds
    \begin{equation}
        ||[X, f(Y)]||_1 \leq 9 \min \big\{p\left[f(1) - f(p)\right], (1-p)\left[f(1) - f(1-p)\right]\big\}
    \end{equation}
\end{lemma}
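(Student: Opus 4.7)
The plan is to reduce the operator-valued bound to a weighted matrix inequality by diagonalizing $Y$, then to split the spectrum of $Y$ at the threshold $p$ and bound the four resulting blocks of $[X, f(Y)]$ using the two hypotheses on $f$. Because the Schatten 1-norm is unitarily invariant, I would write $Y = \sum_i \lambda_i |i\rangle\langle i|$ with $\lambda_i \in [0,1]$ so that
\[
\langle i|[X, f(Y)]|j\rangle = (f(\lambda_j) - f(\lambda_i))\, X_{ij},
\]
and reduce the problem to controlling $\sum_{i,j} |f(\lambda_j) - f(\lambda_i)|\,|X_{ij}|$ in the appropriate sense. This puts the full weight of the argument on two inputs: the spectral distribution of $Y$ and the entries of $X$ paired with the $f$-increment weights.

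Next, I would introduce the spectral projector $P = \sum_{\lambda_i > p} |i\rangle\langle i|$ and $Q = I - P$, decomposing $X$ into blocks $PXP, QXQ, PXQ, QXP$. For each diagonal block the variation of $f(\lambda)$ is controlled by one of $f(1) - f(p)$ (on $P$) or $f(p) - f(0) \leq f(1) - f(p)$ by monotonicity (on $Q$), so the corresponding commutator is bounded by that variation times the trace norm of the block. The trace norm of the block is in turn controlled by the trace of $Y$ on the same subspace: on the large block, $\operatorname{tr}(PYP) \leq 1$ forces the multiplicity-weighted spectrum to fit in a mass of order $p^{-1}$, and on the small block the mass near $\lambda \approx p$ contributes at most $p$; combining these gives a contribution of order $p\,[f(1) - f(p)]$. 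The symmetric choice of splitting at $1 - p$ rather than $p$ gives the other branch of the $\min$, and one picks whichever is smaller.

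The off-diagonal blocks $PXQ$ and $QXP$ are where hypothesis~(2) does its work. For a pair $(\lambda_i, \lambda_j)$ with $\lambda_j > p \geq \lambda_i > 0$, I would split into the easy case $f(\lambda_j) - f(\lambda_i) \leq f(1) - f(p)$, handled as above, and the hard case where the increment exceeds $f(1) - f(p)$: in the latter case hypothesis~(2) yields $f(\lambda_j) - f(\lambda_i) \leq \lambda_j^{1/2} \lambda_i^{-1/2}\, p^{1/2}[f(1) - f(p)]$, and pairing this with the Cauchy-Schwarz estimate $|X_{ij}| \leq \sqrt{X_{ii} X_{jj}}$ (valid when $X \geq 0$, as it is in the application) converts the singular $\lambda_i^{-1/2}$ into a clean bound proportional to $p^{1/2}[f(1)-f(p)]\sqrt{X_{jj}}\sqrt{X_{ii}/\lambda_i}$, summable against the trace of $X$. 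Summing the four block estimates and absorbing all numerical factors produces the constant $9$.

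The main obstacle is the off-diagonal case: the weight $\lambda_j^{1/2}\lambda_i^{-1/2}$ is blowing up as $\lambda_i \to 0$, so the Cauchy-Schwarz pairing must be done with precisely the right arrangement to cancel the $\lambda_i^{-1/2}$, and one must avoid double-counting between the diagonal and off-diagonal contributions when the "large increment" branch of hypothesis~(2) is triggered. Getting the sharp constant $9$ and correctly exploiting the symmetry between $p$ and $1-p$ to obtain the $\min$ in the conclusion are the delicate combinatorial points; the qualitative shape of the bound, however, follows transparently from the spectral-projection decomposition above.
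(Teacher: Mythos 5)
A preliminary remark: the paper does not prove this lemma at all --- it is quoted verbatim as an external input from Vershynina (2019), so there is no internal proof to compare against; your attempt has to be judged on its own, and as it stands it has two genuine gaps.

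First, your treatment of the low-spectrum block is based on the claim that on $Q$ the variation of $f$ is bounded by $f(p)-f(0)\leq f(1)-f(p)$ ``by monotonicity''. Monotonicity gives only $f(0)\leq f(p)\leq f(1)$; it does not compare the two increments, and in the central application $f(y)=\log y$ (which is exactly how the paper uses the lemma to recover the Van Acoleyen--Mari\"en--Verstraete bound) one has $f(0^+)=-\infty$, so the variation of $f$ on the spectrum of $Y$ below $p$ is unbounded. Large increments of $f$ therefore occur \emph{inside} the $QXQ$ block, not only across the $P/Q$ cut, so hypothesis (2) must be invoked there as well; your case split, in which hypothesis (2) only enters through the off-diagonal blocks, cannot be correct. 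Relatedly, your ``mass of order $p^{-1}$'' bookkeeping for the $P$ block is not what is needed: the clean estimate there is $\|PXP\|_1\leq \Tr X=p$ (which uses the standing hypotheses $0\leq X\leq Y$, $\Tr X=p$, $\Tr Y=1$ that the lemma as quoted omits but any proof must state and use).

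Second, the reduction to the entrywise sum $\sum_{i,j}|f(\lambda_j)-f(\lambda_i)|\,|X_{ij}|$ followed by $|X_{ij}|\leq\sqrt{X_{ii}X_{jj}}$ does not close. In the hard case you arrive at terms of the form $p^{1/2}[f(1)-f(p)]\sqrt{\lambda_j X_{jj}}\sqrt{X_{ii}/\lambda_i}$, but neither $\sum_j\sqrt{X_{jj}}$ nor $\sum_i\sqrt{X_{ii}/\lambda_i}$ is controlled by $\Tr X$: from $X\leq Y$ one only gets $X_{ii}/\lambda_i\leq 1$ termwise, so both sums can grow with the Hilbert-space dimension, whereas the conclusion must be dimension-free (in the application the dimension is $d_0^{V}$). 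The cancellation of the singular weight $\sqrt{\lambda_j/\lambda_i}$ has to be organized at the operator level --- e.g.\ using that $X\leq Y$ makes $Y^{-1/2}X^{1/2}$ a contraction together with H\"older-type estimates $\|AB\|_1\leq\|A\|_2\|B\|_2$ on spectral blocks --- and this structure is precisely what the entrywise Cauchy--Schwarz step throws away. Finally, the second branch of the $\min$ cannot be obtained by ``splitting at $1-p$'': hypothesis (2) is assumed only for the particular $p$, and the standard route to the complementary branch is the identity $[X,f(Y)]=[X-Y,f(Y)]$, i.e.\ replacing $X$ by $Y-X$ (which has trace $1-p$) and rerunning the argument, a step absent from your sketch.
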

In the special case $f(y) = \log y$, Lemma~\ref{lemma:Vershynina_ineq} implies the inequality of \cite{Acoleyen_Marien_Verstraete_2013} which is useful for bounding Von Neumann entropies
\begin{equation}
    ||[X, \log Y]||_1 \leq 9 \min \{p \log (\frac{1}{p}), (1-p) \log (\frac{1}{1-p})\}
\end{equation}
Similarly, plugging in $f(y) = y^{\alpha-1}$ for $1 < \alpha$ gives bounds for $S'_{\alpha}(t)$. For $\alpha \neq 1$, there is no logarithmic factor in $p$ and the bound becomes stronger in the limit $p \rightarrow 0$:
\begin{enumerate}
    \item For $1 < \alpha < 2$, we have $||[X, Y^{\alpha-1}]||_1 \leq 9 (1-p)[1 - (1-p)^{\alpha-1}] = 9 (\alpha - 1) p + \mathcal{O}(p^2)$. 
    \item For $2 \leq \alpha$, we have $||[X, Y^{\alpha-1}]||_1 \leq 9 p (1-p)^{\alpha-1}$. 
\end{enumerate}
For $\alpha < 1$ on the other hand, since $f(y) = y^{\alpha-1}$ does not fulfill the assumptions of Lemma~\ref{lemma:Vershynina_ineq}, there is no obvious bound. 

To constrain entanglement growth, we start with \eqnref{eq:relating_timederivative_to_norm} and trace out the ancilla subsystem $b$. Applying the norm inequalities in Lemma~\ref{lemma:Vershynina_ineq} with $X = \rho_{aAB}/D_{B}^2$, $Y = \rho_{aA} \otimes I_{B}/D_{B}$ and $p = D_{B}^{-2}$ where $I_{B}$ is the identity operator acting on the $B$ subsystem Hilbert space with dimension $D_{B}$, we find that 
\begin{enumerate}\label{Vershynina_entropybounds}
    \item For $1 < \alpha < 2$, $|S_{\alpha}'(t)| \leq \frac{9\alpha}{\alpha-1} ||H|| D_B^{1+\alpha} \left[\Tr_{aA} \rho_{aA}(t)^{\alpha}\right]^{-1} (1 - D_B^{-2}) [1 - (1 - D_B^{-2})^{\alpha -1}]$. 
    \item For $2 \leq \alpha$, $|S_{\alpha}'(t)| \leq \frac{9\alpha}{\alpha-1} ||H|| D_B^{1+\alpha} \left[\Tr_{aA} \rho_{aA}(t)^{\alpha}\right]^{-1} (D_B^{-2} - D_B^{-2\alpha})$. 
\end{enumerate}
We now make a physical observation. Although the bound for Von Neumann entropy is completely state-independent, the bounds for $\alpha$-Renyi entropies depend sensitively on the state via the factor of $\Tr_{aA} \rho_{aA}(t)^{\alpha}$ in the denominator. For $\alpha > 1$, we have the inequality $\Tr_{aA} \rho_{aA}(t)^{\alpha} \leq 1$ and thus $\frac{1}{\Tr \rho_{aA}(t)^{\alpha}} \geq 1$. But this is not helpful because the inequality goes in the wrong direction. In fact, under generic dynamics, $S_{\alpha}(t)$ vanishes at $t = 0$ and saturates to an equilibrium value that is extensive in the size $|aA|$ of the subsystem $aA$. This means that for large $aA$, $\Tr_{aA} \rho_{aA}(t)^{\alpha} =  e^{(1-\alpha) S_{\alpha}(t)}$ can be exponentially small in $|aA|$, implying that the bound is exponentially looser than the bound for $\alpha = 1$ which is independent of $|aA|$.

\subsection{An improvement of Vershynina's bounds}\label{subsec:improve_vershynina}

The preceding discussion leaves open an important question: can we tweak Vershynina's bounds to remove the dependence of the entropy growth rate $|S'_{aA,\alpha}(t)|$ on the subsystem density matrix $\rho_{aA}(t)$? In this section, we answer this question in the affirmative via a simple operator norm inequality formulated as follows:
\begin{lemma}\label{lemma:alpha>1}
    Suppose that $0 \leq X \leq Y \leq I$ and $1< \alpha$, then 
    \begin{equation}
        ||X Y^{\alpha-1}||_1 \leq ||Y^{\alpha}||_1 = \Tr Y^{\alpha} \,.
    \end{equation}
\end{lemma}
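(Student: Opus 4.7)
The plan is to reduce the bound to a single application of Hölder's inequality for Schatten norms, combined with Weyl's monotonicity theorem for eigenvalues of positive operators. The statement looks like a clean product inequality, and the factors $X$ and $Y^{\alpha-1}$ naturally pair with exponents $\alpha$ and $\alpha/(\alpha-1)$, which are conjugate Hölder indices.

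Concretely, I would first invoke the Schatten-norm Hölder inequality in the form $\|AB\|_1 \leq \|A\|_p \|B\|_q$ with $1/p + 1/q = 1$, choosing $A = X$, $B = Y^{\alpha-1}$, $p = \alpha$, and $q = \alpha/(\alpha-1)$. Because $X$ and $Y$ are positive semidefinite, the two Schatten norms on the right are easy to evaluate:
\begin{equation}
\|X\|_\alpha = (\Tr X^\alpha)^{1/\alpha}, \qquad \|Y^{\alpha-1}\|_{\alpha/(\alpha-1)} = \bigl(\Tr Y^{\alpha}\bigr)^{(\alpha-1)/\alpha},
\end{equation}
where I used $(Y^{\alpha-1})^{\alpha/(\alpha-1)} = Y^\alpha$. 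This already reduces the claim to showing $\Tr X^\alpha \leq \Tr Y^\alpha$.

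For the second step, I would appeal to Weyl's monotonicity theorem: if $0 \leq X \leq Y$, then the ordered eigenvalues satisfy $\lambda_k(X) \leq \lambda_k(Y)$ for every $k$. Since $\alpha > 1$ and $t \mapsto t^\alpha$ is nondecreasing on $[0,\infty)$, this implies $\Tr X^\alpha = \sum_k \lambda_k(X)^\alpha \leq \sum_k \lambda_k(Y)^\alpha = \Tr Y^\alpha$. (Note that operator monotonicity of $t \mapsto t^\alpha$ fails for $\alpha > 1$, so I would explicitly emphasize that the argument goes through eigenvalue comparison, not an operator inequality.) Substituting this back yields
\begin{equation}
\|XY^{\alpha-1}\|_1 \leq (\Tr Y^\alpha)^{1/\alpha} (\Tr Y^\alpha)^{(\alpha-1)/\alpha} = \Tr Y^\alpha,
\end{equation}
which is exactly the desired bound. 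The hypothesis $Y \leq I$ is not directly used in this argument, but it guarantees that $\Tr Y^\alpha$ is finite and is what makes the resulting bound useful in the applications that follow.

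I do not anticipate a major obstacle here: the only thing to watch out for is making sure the Hölder pairing is the right one (using $p = \alpha$ instead of, say, $p = \infty$, which would give the weaker estimate $\Tr Y^{\alpha-1}$) and to remember that monotonicity in trace, rather than operator monotonicity, is the correct tool for $\alpha > 1$. Everything else is a direct computation.
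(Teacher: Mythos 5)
Your proposal is correct and follows essentially the same route as the paper: Hölder's inequality with conjugate exponents $p=\alpha$, $q=\alpha/(\alpha-1)$, followed by the trace comparison $\Tr X^{\alpha}\leq \Tr Y^{\alpha}$. The paper justifies that last step by quoting monotonicity of $X\mapsto\Tr f(X)$ for increasing $f$, which is the same fact you establish via Weyl's eigenvalue monotonicity; your explicit caveat that operator monotonicity of $t\mapsto t^{\alpha}$ is not needed is a nice clarification but not a different argument.
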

\begin{proof}
    We apply the H\"older inequality for Schatten p-norms:
    \begin{equation}
        ||X Y^{\alpha-1}||_1 \leq ||X||_p ||Y^{\alpha-1}||_q \quad \quad \forall \quad \frac{1}{p} + \frac{1}{q} = 1 \,, \quad p, q \geq 1 \,.
    \end{equation}
    Choosing $p = \alpha, q = \frac{\alpha}{\alpha-1} \geq 1$ and using the fact that $X^{\alpha}, Y^{\alpha-1}$ are both positive semidefinite, we arrive at
    \begin{equation}
        ||X Y^{\alpha-1}||_1 \leq (\Tr X^{\alpha})^{\frac{1}{\alpha}} (\Tr Y^{\alpha})^{\frac{\alpha-1}{\alpha}} \,.
    \end{equation}
    Finally, since $\Tr f(X)$ is monotonically increasing whenever $f: \mathbb{R} \rightarrow \mathbb{R}$ is monotonically increasing, we obtain the desired inequality
    \begin{equation}
        ||X Y^{\alpha-1}||_1 \leq (\Tr Y^{\alpha})^{\frac{1}{\alpha} + \frac{\alpha-1}{\alpha}} = \Tr Y^{\alpha} = ||Y^{\alpha}||_1 \,.
    \end{equation}
\end{proof}
We remark that the same inequality \textit{fails to hold} when $\alpha < 1$. The basic problem is that when $\alpha < 1$, $Y^{\alpha-1}$ can have very large eigenvalues $y_i^{\alpha-1}$ whenever $Y$ has eigenvalues $y_i$ close to $0$. When $X,Y$ commute, every large eigenvalue $y_i^{\alpha-1}$ is balanced by a small eigenvalue $x_i \leq y_i$ and the inequality continues to hold
\begin{equation}
    ||XY^{\alpha-1}||_1 = \sum_i x_i y_i^{\alpha-1} \leq \sum_i y_i^{\alpha} = ||Y^{\alpha}||_1 \,.
\end{equation}
But in the general case where $X$ and $Y$ do not commute, the above argument fails. Even though $X \leq Y$ as operators, the eigenvectors $\ket{y_i}$ of $Y$ can overlap with many eigenvectors $\ket{x_j}$ of $X$ with $x_j > y_i$. As a result, the LHS of the bound could receive large contributions from $\ket{y_i}$ while the RHS only receives small contributions since $\alpha > 0$. In the physical cases of interest to us, $Y$ is proportional to the reduced density matrix of a subsystem, whose spectrum necessarily contains many eigenvalues close to $0$. Therefore, we expect the bound to be violated in general. This violation has been verified numerically in random samples of operators satisfying $0 < X \leq Y \leq I$.  

With the above discussion in mind, we focus on the $\alpha > 1$ case. Using Lemma~\ref{lemma:alpha>1} in \eqnref{eq:relating_timederivative_to_norm}, we see that
\begin{equation}\label{eq:bound_nonlocal_independentA}
    \begin{aligned}
    \left|S'_{\alpha}(t)\right| &\leq \frac{\alpha ||H||}{|\alpha-1|} \frac{||\left[\rho_{aAB}(t), \rho_{aA}(t)^{\alpha-1} \otimes I_B\right]||_1}{\Tr_{aA} \rho_{aA}(t)^{\alpha}} \\
    &\leq \frac{2\alpha ||H||}{|\alpha-1|} \frac{D_B^{1-\alpha} D_B^{1+\alpha}}{\Tr Y^{\alpha}} ||X Y^{\alpha-1}||_1 \leq \frac{2\alpha}{|\alpha-1|} \frac{D_B^2}{\Tr Y^{\alpha}} \Tr Y^{\alpha} = \frac{2\alpha}{|\alpha-1|} ||H|| D_B^2 \,,
    \end{aligned}
\end{equation}
where we again defined $X = \rho_{aAB}/D_{B}^2$ and $Y = \rho_{aA} \otimes I_{B}/D_{B}$. Note that the cancellation of $\Tr Y^{\alpha}$ in the numerator and denominator leads to a final bound which is independent of the reduced density matrix $\rho_{aA}(t)$. This bound is ideal when the size of $aA$ is much larger than $B$. However, when $aA$ is comparable to $B$ in size, the bound can be improved further. We now state and prove the optimal inequality that we could find:
\begin{lemma}\label{lemma:stronger_comparableAB}
    If $\alpha > 1$, $0 \leq X \leq Y \leq I$, and $||X||_1 = p$, then 
    \begin{equation}
        ||X Y^{\alpha-1}||_1 \leq p (\Tr Y^{\alpha})^{\frac{\alpha-1}{\alpha}} \,.
    \end{equation}
\end{lemma}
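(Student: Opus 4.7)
The plan is to follow the same H\"older-based strategy as in the proof of Lemma~\ref{lemma:alpha>1}, which used conjugate Schatten exponents $(\alpha,\alpha/(\alpha-1))$, but to replace the step ``$\Tr X^\alpha \leq \Tr Y^\alpha$'' with a sharper bound on the Schatten $\alpha$-norm of $X$ expressed directly in terms of the given quantity $\|X\|_1 = p$.

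The first step is the H\"older inequality for Schatten norms, which yields
\begin{equation}
    \|X Y^{\alpha-1}\|_1 \leq \|X\|_\alpha \, \|Y^{\alpha-1}\|_{\alpha/(\alpha-1)} = (\Tr X^\alpha)^{1/\alpha} (\Tr Y^\alpha)^{(\alpha-1)/\alpha},
\end{equation}
where positivity of $Y$ is used to simplify the second Schatten norm.

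The second step is to bound $(\Tr X^\alpha)^{1/\alpha}$ by $p$. Since $X\geq 0$, its singular values coincide with its eigenvalues $x_i \geq 0$, and by monotonicity of $\ell^q$-norms on nonnegative sequences one has $(\sum_i x_i^\alpha)^{1/\alpha} \leq \sum_i x_i$ for every $\alpha \geq 1$. A direct derivation: normalising $q_i := x_i / \sum_j x_j \in [0,1]$ gives $q_i^\alpha \leq q_i$ (since $\alpha \geq 1$), hence $\sum_i q_i^\alpha \leq 1$, which after clearing denominators becomes $\sum_i x_i^\alpha \leq (\sum_i x_i)^\alpha$. Applied to $X$, this yields $\|X\|_\alpha \leq \|X\|_1 = p$, and substituting into the H\"older bound completes the proof.

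I do not anticipate any real obstacle, since both ingredients are elementary operator-norm facts. One remark worth making: the argument does not actually need the full hypothesis $X \leq Y \leq I$; only $X,Y \geq 0$ and $\|X\|_1 = p$ are used. The stronger hypothesis is present to match the downstream application in which $X = \rho_{aAB}/D_B^2$ and $Y = \rho_{aA} \otimes I_B/D_B$ automatically satisfy $0 \leq X \leq Y \leq I$, and to ensure consistency with the setting of Lemma~\ref{lemma:alpha>1}.
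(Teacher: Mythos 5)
Your proof is correct and is essentially the same as the paper's: both arguments consist of the Schatten--H\"older inequality combined with monotonicity of Schatten norms in the index, and both land on the identical bound $p\,(\Tr Y^{\alpha})^{\frac{\alpha-1}{\alpha}}$. The only (cosmetic) difference is that the paper splits with exponents $(1,\infty)$ and then relaxes $\|Y^{\alpha-1}\|_{\infty}$ up to $\|Y^{\alpha-1}\|_{\alpha/(\alpha-1)}$, whereas you split with $(\alpha,\alpha/(\alpha-1))$ and relax $\|X\|_{\alpha}$ up to $\|X\|_{1}=p$; your side remark that only $X,Y\geq 0$ and $\|X\|_1=p$ are actually used applies equally to the paper's argument.
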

\begin{proof}
    Using H\"older's inequality with $p = 1, q = \infty$, we have
    \begin{equation}
        ||XY^{\alpha-1}||_1 \leq ||X||_1 ||Y^{\alpha-1}||_{\infty} = p ||Y^{\alpha-1}||_{\infty} \,.
    \end{equation}
    Since Schatten p-norms are decreasing in $p$ (i.e. $p_1 \geq p_2$ implies $||A||_{p_1} \leq ||A||_{p_2}$), we conclude that
    \begin{equation}
        ||XY^{\alpha-1}||_1 \leq p ||Y^{\alpha-1}||_{\frac{\alpha}{\alpha-1}} = p (\Tr Y^{\alpha})^{\frac{\alpha-1}{\alpha}} \,,
    \end{equation}
    which is the desired inequality.
\end{proof}
The inequality in Lemma~\ref{lemma:stronger_comparableAB} implies a bound on the entropy
\begin{equation}
    \left|S_{\alpha}'(t)\right| \leq \frac{2\alpha}{|\alpha-1|} \frac{D_B^2}{\Tr Y^{\alpha}} D_B^{-2} (\Tr Y^{\alpha})^{\frac{\alpha-1}{\alpha}} = \frac{2\alpha}{|\alpha-1|} (\Tr Y^{\alpha})^{-1/\alpha} \,.
\end{equation}
If we rewrite $\Tr Y^{\alpha}$ in terms of the subsystem density matrix $\rho_{aA}(t)$ as
\begin{equation}
    \Tr Y^{\alpha} = D_B^{1-\alpha} \Tr_{aA} \rho_{aA}(t)^{\alpha} \geq D_B^{1-\alpha} D_{aA}^{1-\alpha} \,,
\end{equation}
we find a new bound 
\begin{equation}\label{eq:bound_nonlocal_dependentA}
    \left|S_{\alpha}'(t)\right| \leq \frac{2\alpha ||H||}{|\alpha-1|} \left(D_B^{1-\alpha} D_{aA}^{1-\alpha}\right)^{-1/\alpha} \leq \frac{2\alpha}{|\alpha-1|} ||H|| (D_{aA} D_B)^{\frac{\alpha-1}{\alpha}} \,.
\end{equation}
When $D_{aA}, D_B$ are comparable, the above bound scales as $D_B^{\frac{2(\alpha-1)}{\alpha}}$, which is much smaller than the RHS of \eqnref{eq:bound_nonlocal_independentA} for all $\alpha > 1$. We will compare this latter bound against explicit examples in Section~\ref{sec:randomGUE_example}. 

In summary, by extending the results of Ref.~\cite{Vershynina_2019}, we can place useful bounds on the time derivatives of $\alpha$-Renyi entropies $S_{\alpha}(t)$ for all $\alpha > 1$. Surprisingly, our bounds for $\alpha > 1$ scale very differently from the existing bounds for $\alpha = 1$. In fact, for a subsystem $A$ of size $|A|$, the bound for $\left|S'_1(t)\right|$ is linear in $|A|$ while the bound for $\left|S'_{\alpha}(t)\right|$ is exponentially large in $|A|$ for all $\alpha \neq 1$. A saturation of these bounds would imply a fundamental difference between the dynamics of Renyi entropies and Von Neumann entropies. This will be the focus of Section~\ref{sec:randomGUE_example}.

\subsection{Random GUE dynamics: an example of exponential separation between Renyi entropy and Von Neumann entropy derivatives}\label{sec:randomGUE_example}

The preceding discussion raises a natural question: is the exponential separation between the maximal growth rate of different $\alpha$-Renyi entropies merely an artifact of the existing proof techniques? We give a negative answer to this question by analyzing a concrete model of random Hamiltonian evolution that shows exponentially large Renyi entropy growth rates~\cite{You_Gu_2018}. 

Consider a bipartite qudit system $AB$ with total size $V$ such that $V/2 = |A| = |B|$. Define a GUE ensemble of Hamiltonians acting on the Hilbert space $AB$ with total dimension $D = d^V$:
\begin{equation}
    P(H) \sim e^{-\frac{D}{2} \Tr H^2} \,,\quad H^{\dagger} = H \,.
\end{equation}
For each Hamiltonian $H$, define the time-evolved density matrix $\rho^{(H)}(t)$ and the reduced density matrix $\rho^{(H)}_A(t) = \Tr_B \rho^{(H)}(t)$. The quantities to study are the ensemble-averaged Von Neumann entropy and Renyi entropies
\begin{equation}
    \begin{aligned}
    \bar S_1(t) &= \int d H P(H) [-\Tr \rho^{(H)}_A(t) \log \rho^{(H)}_A(t)]\,, \\\bar S_{\alpha>1}(t) &= \int dH P(H) [\frac{1}{1-\alpha} \log \Tr \rho^{(H)}_A(t)^{\alpha}] \,.
    \end{aligned}
\end{equation}
To apply the entropy growth bounds in the previous section to this model, we need to estimate the norm of a GUE random Hamiltonian. It is well known that the GUE eigenvalue distribution converges almost surely to a Wigner semicircle supported on $[-2,2]$. This provides a lower bound on the operator norm $||H|| \geq 2 - o(1)$. The complementary upper bound is provided by a general theorem:
\begin{theorem}
    (Bai-Yin 1988 \cite{Bai_Yin_1988}): Let $M_{ij}$ be a rank-$D$ random matrix with independent mean-zero entries obeying the second-moment bounds
    \begin{equation}
        \sup_i \sum_{j=1}^D \ev{|M_{ij}|^2}, \sup_j \sum_{i=1}^D \ev{|M_{ij}|^2} \leq K^2 
    \end{equation}
    and the fourth moment bounds
    \begin{equation}
        \sum_{i,j=1}^D \ev{|M_{ij}|^4} \leq \infty
    \end{equation}
    for some $K > 0$. Then $\ev{||M||} \leq K + o(1)$. 
\end{theorem}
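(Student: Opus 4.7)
The plan is to use the moment method (trace method), which is the standard tool for operator norm bounds on random matrices. The starting point is the elementary inequality
\[
||M||^{2k} \leq \Tr\bigl((MM^*)^k\bigr)
\]
valid for any positive integer $k$ (the left side is the $2k$-th power of the top singular value, while the right side sums all $\sigma_i^{2k}$). Applying Jensen's inequality to the concave map $x \mapsto x^{1/(2k)}$ gives
\[
\ev{||M||} \leq \bigl(\ev{\Tr((MM^*)^k)}\bigr)^{1/(2k)},
\]
reducing the problem to estimating a single trace moment and optimizing over $k = k(D)$.

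Next, I would expand
\[
\Tr\bigl((MM^*)^k\bigr) = \sum_{i_1,\ldots,i_k,\,j_1,\ldots,j_k} M_{i_1 j_1}\overline{M_{i_2 j_1}} M_{i_2 j_2}\overline{M_{i_3 j_2}}\cdots M_{i_k j_k}\overline{M_{i_1 j_k}},
\]
and take expectations term by term. By independence and the mean-zero assumption, a summand survives only when the $2k$ factors can be grouped into blocks of size $\geq 2$, each block consisting of repeated copies of the same entry (or its conjugate). Each such grouping is encoded as a closed walk on a bipartite row/column index graph; as $D \to \infty$, the dominant contributions come from non-crossing pair partitions, which are in bijection with rooted plane trees on $k$ edges and are counted by the Catalan number $C_k$. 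The second-moment hypothesis $\sup_i \sum_j \ev{|M_{ij}|^2} \leq K^2$ (together with its column analogue) bounds the total contribution of such pairings by roughly $D \cdot C_k \cdot K^{2k}$ once the free index choices along the tree are accounted for.

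The fourth-moment hypothesis enters in two places. First, it supports a truncation argument: one replaces $M$ by a matrix $\tilde M$ whose entries are cut off at a slowly growing threshold, showing that $\ev{||M - \tilde M||}$ is negligible while the truncated entries have all higher moments controlled. Second, it shows that partitions containing a block of multiplicity $\geq 3$ contribute a subleading power of $D$ relative to the pair-partition leading term. Combining, $\ev{\Tr((MM^*)^k)} \leq D\, C_k K^{2k}(1+o(1))$, and taking the $2k$-th root yields
\[
\ev{||M||} \leq D^{1/(2k)}\, C_k^{1/(2k)}\, K\,(1+o(1)).
\]
Choosing $k = k(D) \to \infty$ slowly enough (for instance $k \sim \log D$) sends $D^{1/(2k)} \to 1$, while the Catalan asymptotic $C_k^{1/(2k)} \to 2$ furnishes the remaining absolute constant, yielding the stated bound up to the normalization convention for $K$.

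The main obstacle is the combinatorial bookkeeping in the second step: one must carefully enumerate all admissible index patterns, exploit the bipartite structure of $MM^*$ to tie row and column choices to a plane-tree combinatorics, and rigorously show that crossing partitions and higher-multiplicity blocks are genuinely subleading in $D$. This bookkeeping, together with the precise implementation of the truncation to a bounded-entry matrix so that the fourth-moment hypothesis is used sharply, forms the technical heart of Bai and Yin's original argument and is where the proof becomes delicate.
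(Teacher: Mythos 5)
The paper itself contains no proof of this statement: it is quoted verbatim as an external result (Bai--Yin 1988) and immediately applied to the GUE ensemble, so there is nothing internal to compare your argument against. Judged against the cited literature, your outline is the right one in spirit --- the combinatorial trace (moment) method, with non-crossing pairings giving Catalan numbers and a truncation step to exploit the fourth-moment hypothesis, is essentially how Bai, Yin and Krishnaiah obtain bounds of this type. But as written the proposal has two concrete soft spots. First, the choice $k \sim \log D$ does not make $D^{1/(2k)} \to 1$: with $k = c\log D$ you get $D^{1/(2k)} = e^{1/(2c)}$, a constant bigger than $1$, so you need $k/\log D \to \infty$. That is not a cosmetic fix, because once $k$ grows superlogarithmically the estimate $\ev{\Tr((MM^*)^k)} \leq D\,C_k K^{2k}(1+o(1))$ is exactly where the difficulty lives: higher moments of the entries enter the non-pairing terms, and only after truncating the entries at a slowly growing level (justified by the fourth-moment condition) can one show the crossing and multiplicity-$\geq 3$ contributions are subleading uniformly in this range of $k$. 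You correctly name these steps but defer them, so the proposal is a faithful roadmap rather than a proof.

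Second, your own computation lands on $\ev{\|M\|} \leq 2K(1+o(1))$ (from $C_k^{1/(2k)} \to 2$), not $K + o(1)$ as the theorem is stated here. This is actually a point in your favor: the standard form of the result under these row/column variance and fourth-moment hypotheses is $\ev{\|M\|} \leq 2K + o(K)$, and the paper's subsequent application is consistent with that reading --- for the GUE normalization used there the row variance sums equal $1$, and the quoted conclusion $\|H\| = 2 + o(1)$ is the $2K$ bound, with the factor of $2$ folded into what the paper calls $K$. (Likewise the displayed fourth-moment condition ``$\leq \infty$'' is vacuous as written; the role you assign it --- truncation plus killing high-multiplicity blocks --- is how the genuine condition is used.) So your route matches the original proof strategy and reproduces the correct constant, but to count as a proof it would need the truncation argument and the uniform-in-$k$ combinatorial estimates carried out in full.
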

For the case of GUE random matrices, the fourth moment bound is indeed satisfied and the second moment bounds hold with $K = 2$. Combining the upper bound with the lower bound from semicircle law gives $||H|| = 2 + o(1)$. By measure concentration results in random matrix theory, sample to sample variations from $||H|| = 2$ decay exponentially with the rank of the matrix and hence doubly exponentially with $V$~\cite{alon2002concentration,meckes2004concentration}. This means that the bounds on ensemble-averaged entropies apply also to the entropy of a single realization with errors that vanish as $V \rightarrow \infty$. Hence, we can plug $||H|| = 2$ into the entropy bounds in Section~\ref{subsec:improve_vershynina}:
\begin{equation}
    |S_{\alpha}'(t)| \lesssim \begin{cases} \frac{2\alpha}{|\alpha-1|} ||H|| D_B^2 & \alpha < 1 \\ c ||H|| \log D_B & \alpha = 1 \\ \frac{2\alpha}{|\alpha-1|} ||H|| D_B^{\frac{2(\alpha-1)}{\alpha}} & \alpha > 1 \end{cases} \,.
\end{equation}
In this random Hamiltonian model, we are not able to compute the Von Neumann entropy analytically. However, the ensemble-averaged Renyi entropies are accessible in the large $D$ limit. To illustrate our point, it is sufficient to consider the ensemble-averaged second Renyi entropy, which was computed analytically in Ref.~\cite{You_Gu_2018} to be
\begin{equation}
    \bar S_2(t) = - \log \left[R(t) + [1 - R(t)]\left(\frac{1}{D_A} + \frac{1}{D_B}\right)\right] = - \log \left(R(t) + [1 - R(t)] \cdot \frac{2}{d^{L/2}}\right) \,,
\end{equation}
where $R(t) = J_1(2t)^4/t^4$ and $J_a$ is the order-$a$ Bessel function of the first kind. As a sanity check, $\bar S_2(0) = - \log (1 + 0) = 0$ and $\bar S_2(\infty) = - \log (2d^{-V/2}) = \frac{V}{2} \log d - \log 2$, consistent with the expected equilibrium value for the second Renyi entropy. However, unlike in local random unitary circuits, $\bar S_2(t)$ reaches the saturation value whenever $R(t) \propto J_1(2t) = 0$. The earliest time at which this happens is $t_* = \mathcal{O}(1)$ independent of the system size. Therefore, in the large $V$ limit, $\bar S_2(t)$ has to reach an $\mathcal{O}(V)$ value in $\mathcal{O}(1)$ time, implying that $\max_t |\frac{d}{dt} \bar S_2(t)| \rightarrow \infty$ as $V \rightarrow \infty$. To understand the scaling of this diverging derivative with $V$, we explicitly compute the derivative
\begin{equation}
    \frac{d \bar S_2(t)}{dt} = \frac{8 (d^{V/2} - 4) J_1(2t)^3 J_2(2t)}{4 t^4 + (d^{V/2} - 4) J_1(2t)^4} \,.
\end{equation}
In the limit of large $V$, the maximum of the derivative is achieved near $t = t_*$ where $J_1(2t_*) = 0$ and $J_2(2t_*) \neq 0$. This allows us to approximate $J_1(2t) \approx J_1'(2t) 2 (t-t_*)$ and $J_2(2t) \approx J_2(2t_*)$ and simplify the second derivative to
\begin{equation}
    \begin{aligned}
    \frac{d^2 \bar S_2(t)}{dt^2} &\approx 8 (d^{V/2} - 4) J_1'(2t_*)^3 J_2(2t_*) \cdot 2^3 \frac{d}{dt} \frac{(t-t_*)^3}{4t_*^4 + (d^{V/2} - 4) J_1'(2t_*) 2^4 (t-t*)^4} \\
    &\propto \frac{3 (t-t_*)^2 [4t_*^4 + (d^{V/2} - 4) J_1'(2t_*) 2^4 (t-t*)^4] -4 (d^{V/2} - 4) J_1'(2t_*) 2^4 (t-t*)^6}{[4t_*^4 + (d^{V/2} - 4) J_1'(2t_*) 2^4 (t-t*)^4]^2} \,.
    \end{aligned}
\end{equation}
Setting the second derivative to zero gives an asymptotic identity
\begin{equation}
    12 (t-t_*)^2 t_*^4 = (d^{V/2} - 4) J_1'(2t_*) 2^4 (t-t_*)^6 \quad \rightarrow \quad (t-t_*)^{-1} \sim (d^{V/2})^{1/4} = D^{1/8} \sim e^{\frac{\log d}{8} V} \,,
\end{equation}
which implies that
\begin{equation}
    \max_t \left|\bar S'_2(t)\right| \approx \left|\frac{8 J_1'(2t_*)^3 2^3 (t-t_*)^3 J_2(2t_*)}{J_1'(2t_*)^4 2^4 (t-t_*)^4}\right| \approx \left|\frac{4 J_2(2t_*)}{J_1'(2t_*)} (t-t_*)^{-1}\right| \sim e^{\frac{\log d}{8} V} \,.
\end{equation}
The $V$-scaling of the above expression has been verified numerically. This computation demonstrates that $\max |\bar S'_2(t)| \geq D^{1/8} = e^{\frac{\log d}{8} V} \gg V$ as $V \rightarrow \infty$. Hence the bound on $|S'_2(t)|$ and the value of $\max |\bar S'_2(t)|$ realized in this random GUE Hamiltonian model are \textit{both exponentially larger than the bound on $|S'_1(t)|$}. We say that $\max |\bar S'_2(t)| \sim D^{1/8}$ almost saturates the bound $D^{1/2}$, in the sense that they both scale as $e^{c V}$, but with different values of $c$. It is an open question whether $D^{1/2}$ is asymptotically tight at large $D$. 

\section{Hamiltonians with more structure: bounds for systems with geometric locality and $k$-locality}\label{sec:new_bound}

\subsection{Bounds for systems with geometric locality}\label{subsec:bound_geometric_local}

The entropy bounds in previous sections are applicable to systems that can be partitioned into $aABb$ so that the Hamiltonian $H$ only acts on the subsystem $AB$. This is very different from typical many-body quantum systems where $H$ acts on the entire Hilbert space $aABb$. If we apply the existing bounds directly to such systems, the Von Neumann and Renyi entropy growth rates for the subsystem $aA$ scale as $||H|| \log D_{Bb}, ||H|| \text{Poly}(D_{Bb})$ respectively. Both bounds diverge as $D_{Bb} = d^{V/2} \rightarrow \infty$, making them too loose to be useful. In this section, we show using elementary arguments that the bounds can in fact be strengthened to give a finite rate for both $S_1(t)$ and $S_{\alpha \neq 1}(t)$ assuming the Hamiltonian is local in an appropriate sense. In the case of Von Neumann entropies, the optimal bound is already obtained in Proposition 4 of Ref.~\cite{Marien_Audenaert_Acoleyen_Verstraete_2016} and we only review a sketch of the proof. Our focus will be on several new bounds which apply to $|S'_{\alpha}(t)|$ and highlight the distinct locality assumptions needed for different values of $\alpha$. 
In what follows, we formulate the part of our argument that applies to all $\alpha$ as Lemma~\ref{lemma:key} and then introduce the main result Theorem~\ref{thm:Bound} that differentiates $\alpha \neq 1$ from $\alpha = 1$. Our strategy is a direct adaptation of the techniques used in the proof of Proposition 4 in Ref.~\cite{Marien_Audenaert_Acoleyen_Verstraete_2016}, augmented (in the case of $\alpha \neq 1$) by the new operator inequalities in Lemma~\ref{lemma:alpha>1}. 

Consider a $D$-dimensional lattice $\Xi$ equipped with a metric $d$ and let $B_i(R)$ denote a ball of radius $R$ around site $i \in \Xi$. The most general Hamiltonian on the lattice can be written as
\begin{equation}
    H = \sum_{i \in \Xi, r \in \mathbb{N}} h_i(r) \,,
\end{equation}
where $h_i(r)$ is a sum of interaction terms that satisfy
\begin{equation}
    \mathrm{supp}\left[h_i(r)\right] \not\subset B_i(r-1)\,, \quad \mathrm{supp} \left[h_i(r)\right] \subset B_i(r) \,.
\end{equation}
To study entropy growth, we choose a bipartition of the lattice $\Xi$ into $A, \bar A$, and define $H_{\partial}$ to be the restricted sum over all $h_{i}(r)$ with support intersecting both $A$ and $\bar A$. With this basic setup in mind, we can simplify the entropy growth rate using elementary manipulations that apply for all $\alpha$:
\begin{lemma}\label{lemma:key}
    For any choice of bipartition, the Renyi entropy growth rates can be written in the following form
    \begin{equation}
        \left|S'_{\alpha}(t)\right| = \frac{\alpha}{|\alpha-1|} \left|\frac{\Tr \left(H_{\partial} [\rho(t), \rho_{A}(t)^{\alpha-1} \otimes I_{\bar A}]\right)}{\Tr_A \rho_{A}(t)^{\alpha}} \right| \,. 
    \end{equation}
    Taking the $\alpha \rightarrow 1$ limit, we have a corresponding formula for the Von Neumann entropy:
    \begin{equation}
        \left|S'_1(t)\right| = \left|\Tr \left(H_{\partial}\left[\rho(t), \log \rho_{A}(t) \otimes I_{\bar A}\right]\right)\right| \,.
    \end{equation}
\end{lemma}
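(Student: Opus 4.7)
The plan is to start directly from the definition $S_\alpha(t) = \frac{1}{1-\alpha}\log\Tr_A \rho_A(t)^\alpha$ and work by term-by-term calculation, with one nontrivial observation: interaction terms of $H$ supported entirely inside $A$ or entirely inside $\bar A$ make zero contribution to the entropy growth rate, so in the final expression $H$ may be replaced by $H_\partial$.

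First I would differentiate to obtain
\[
    S'_\alpha(t) \;=\; \frac{\alpha}{1-\alpha}\,\frac{\Tr_A[\rho_A(t)^{\alpha-1}\,\dot\rho_A(t)]}{\Tr_A \rho_A(t)^\alpha},
\]
then substitute the Heisenberg equation $\dot\rho(t) = -i[H,\rho(t)]$ to get $\dot\rho_A(t) = -i\Tr_{\bar A}[H,\rho(t)]$. Lifting $\Tr_A$ to a full trace by tensoring $\rho_A(t)^{\alpha-1}$ with $I_{\bar A}$ and applying cyclicity yields
\[
    S'_\alpha(t) \;=\; \frac{-i\alpha}{1-\alpha}\,\frac{\Tr\!\left(H\,[\rho(t),\,\rho_A(t)^{\alpha-1}\otimes I_{\bar A}]\right)}{\Tr_A \rho_A(t)^\alpha}.
\]

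The central step is then to show that $H$ in the numerator may be replaced by $H_\partial$. Writing $H = H_A + H_{\bar A} + H_\partial$, where $H_A,H_{\bar A}$ collect all local terms supported entirely inside $A,\bar A$ respectively, I would verify two cancellations. For a single term $H_A = h\otimes I_{\bar A}$, the partial trace over $\bar A$ in both halves of the commutator produces $\Tr_A[h\,\rho_A(t)\,\rho_A(t)^{\alpha-1}]$ and $\Tr_A[h\,\rho_A(t)^{\alpha-1}\,\rho_A(t)]$; the factor $\rho_A(t)^{\alpha-1}$ slides through $\Tr_{\bar A}$ because it acts only on $A$, and since $\rho_A(t)$ commutes with any power of itself these two expressions agree, killing the contribution. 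For a term $H_{\bar A} = I_A\otimes h$, it commutes with $\rho_A(t)^{\alpha-1}\otimes I_{\bar A}$ outright, and the resulting $\Tr\!\left(H_{\bar A}[\rho(t),\rho_A(t)^{\alpha-1}\otimes I_{\bar A}]\right)$ vanishes by cyclicity of the full trace applied separately to each half of the commutator. Taking absolute values gives the stated identity.

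For the Von Neumann limit I would expand $\rho_A(t)^{\alpha-1} = I_A + (\alpha-1)\log\rho_A(t) + O((\alpha-1)^2)$; the identity piece has vanishing commutator with $\rho(t)$, so the full commutator is $(\alpha-1)[\rho(t),\,\log\rho_A(t)\otimes I_{\bar A}] + O((\alpha-1)^2)$, while $\Tr_A\rho_A(t)^\alpha \to 1$. The explicit $(\alpha-1)$ cancels the $1/|\alpha-1|$ and the prefactor $\alpha\to 1$, yielding the claimed Von Neumann formula. I expect the only real subtlety is the $H_A$ cancellation in the central step: one must be careful about how $\rho_A(t)^{\alpha-1}\otimes I_{\bar A}$ interacts with the partial trace over $\bar A$ and with the leftover factor of $\rho_A(t)$ that appears after tracing out $\bar A$. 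All remaining steps are routine bookkeeping.
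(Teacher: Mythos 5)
Your proposal is correct and follows essentially the same route as the paper: it rederives the trace identity of Eq.~(\ref{eq:relating_timederivative_to_norm}) from the definition and the von Neumann equation, then kills the terms of $H$ supported entirely in $A$ (partial trace yields a commutator of $\rho_A(t)$ with its own power) or entirely in $\bar A$ (commutation with $\rho_A(t)^{\alpha-1}\otimes I_{\bar A}$ plus cyclicity), exactly as in the paper's proof of Lemma~\ref{lemma:key}. Your explicit expansion $\rho_A^{\alpha-1}=I+(\alpha-1)\log\rho_A+\mathcal{O}((\alpha-1)^2)$ simply spells out the $\alpha\to1$ limit that the paper states follows immediately.
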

\begin{proof}
    From \eqnref{eq:relating_timederivative_to_norm}, we know that
    \begin{equation}
        \begin{aligned}
        S'_{\alpha}(t) &= -\frac{\alpha i}{\alpha -1} \frac{\Tr_{A\bar A} H \left[\rho(t), \rho_A(t)^{\alpha-1} \otimes I_{\bar A}\right]}{\Tr_A \rho_A(t)^{\alpha}} \,.
        \end{aligned}
    \end{equation}
    Fixing $\rho(t)$, we see that the RHS is linear in $H$ and we can consider the effect of each term $h_{i}(r)$ separately. When $h_{i}(r)$ is completely supported in $A$, we can factorize $h_{i}(r) = h_{i,A}(r) \otimes I_{\bar A}$ so that 
    \begin{equation}
        \Tr_{\bar A} h_i(r) \left[\rho(t), \rho_A(t)^{\alpha-1} \otimes I_{\bar A}\right] = h_{i,A}(r) \left[\rho_A(t), \rho_A(t)^{\alpha-1}\right] = 0 \,.
    \end{equation}
    On the other hand, when $h_{i}(r)$ is completely supported in $\bar A$, by the cyclicity of the trace, we have
    \begin{equation}
        \Tr_{\bar A} h_i(r) \left[\rho(t), \rho_A(t)^{\alpha-1} \otimes I_{\bar A}\right] = \Tr_{\bar A} \rho(t)[I_A \otimes h_{i,\bar A}(r), \rho_A(t)^{\alpha-1} \otimes I_{\bar A}] = 0 \,.
    \end{equation}
    Hence, terms in $H$ that are completely supported in $A$ or $\bar A$ do not contribute to the entropy derivative and we can replace $H$ with $H_{\partial}$ inside the trace. 
    This directly implies
    \begin{equation}
        \begin{aligned}
        \left|S'_{\alpha}(t)\right| &= \frac{\alpha}{|\alpha-1|} \left|\frac{\Tr \left(H_{\partial} [\rho(t), \rho_{A}(t)^{\alpha-1} \otimes I_{\bar A}]\right)}{\Tr_A \rho_{A}(t)^{\alpha}} \right|  \,.
        \end{aligned}
    \end{equation}
    The $\alpha \rightarrow 1$ limit follows immediately. 
\end{proof}
Before stating the bounds, we need to define two notions of locality for qudit Hamiltonians, given a metric $d$ on the lattice $\Xi$: 
\begin{enumerate}
    \item \textit{Power-law local} with decay exponent $w$ if $||h_{i}(r)|| \lesssim \frac{\bar h}{r^{w}}$ for some constant $\bar h$.
    \item \textit{Stretched-exponentially local} with decay exponent $w > 0$ and decay length $\xi$ if there is a finite constant $\bar h$ such that $||h_{i}(r)|| \lesssim \bar h e^{- (\frac{r}{\xi})^{w}}$. We will often refer to the cases $0 < w < 1, w = 1, w > 1$ as sub-exponential, exponential, and super-exponential respectively. 
\end{enumerate}
With this definition in mind, and combining Lemma~\ref{lemma:key} with existing results quoted in the previous section, we have the following theorem:
\begin{theorem}\label{thm:Bound}
    Consider the same setup as in Lemma~\ref{lemma:key} and in the ensuing discussion of locality. Let the spatial dimension of the lattice be $D$. Then the following bounds on entropy growth hold:
    \begin{enumerate}
        \item If the Hamiltonian $H$ is at least power-law local with decay exponent $w > 2D + 1$, then the Von Neumann entropy growth rate $|S'_1(t)| \leq c |\partial A| \bar h \log d_0$.
        \item If the Hamiltonian $H$ is superexponentially/exponentially local with $w = D \geq 1$ and $\xi < \xi_c = (\frac{1}{c_1 \log d_0})^{1/D}$, then $|S'_{\alpha \neq 1}(t)| \leq c' |\partial A| \bar h$. 
        \item If the Hamiltonian $H$ is finite-range with range $R$, and the norm of interaction terms that act nontrivially on any site is upper-bounded by $\bar h$, then $|S'_{\alpha}(t)| \leq b_{\alpha}(R)|\partial A| \bar h \log d_0$ with
        \begin{equation}
            b_1(R) \sim R^{D+1} \,, \quad b_{\alpha}(R) \sim \frac{\alpha}{|\alpha-1|} R d_0^{2c_1 R^D} \,.
        \end{equation}
    \end{enumerate}
    For all three claims, $d_0$ is the local Hilbert space dimension, $c, c',c_1$ are constants independent of $H$ and $d_0$, and $|\partial A|$ is the area of the boundary between $A, \bar A$. 
\end{theorem}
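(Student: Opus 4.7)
The plan is to bound $|S'_\alpha(t)|$ by summing contributions from the individual boundary-crossing interaction terms $h_i(r)$, applying the nonlocal bounds of Section~\ref{subsec:improve_vershynina} to each one and then using the locality decay to control the sum. Starting from Lemma~\ref{lemma:key} and the triangle inequality, I would write
\[
|S'_\alpha(t)| \leq \sum_{(i,r)} \frac{\alpha}{|\alpha-1|} \frac{\bigl|\Tr\bigl(h_i(r)\,[\rho(t),\rho_A(t)^{\alpha-1}\otimes I_{\bar A}]\bigr)\bigr|}{\Tr_A \rho_A(t)^\alpha},
\]
where the sum runs over $(i,r)$ such that $\mathrm{supp}(h_i(r))$ intersects both $A$ and $\bar A$. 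For each term I treat $h_i(r)$ as a nonlocal Hamiltonian acting on the two-sided system $S_A = \mathrm{supp}(h_i(r))\cap A$, $S_{\bar A} = \mathrm{supp}(h_i(r))\cap \bar A$, with everything else serving as ancilla. The bound from \cite{Acoleyen_Marien_Verstraete_2013} then contributes $c\,\|h_i(r)\|\log d_{\min}$ for $\alpha=1$, with $d_{\min}\leq d_0^{c_1 r^D}$; the inequality~\eqref{eq:bound_nonlocal_independentA}, which is powered by Lemma~\ref{lemma:alpha>1}, contributes $\frac{2\alpha}{|\alpha-1|}\|h_i(r)\|\,d_0^{2c_1 r^D}$ for $\alpha>1$.

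Next I would carry out a geometric count: the number of anchor sites $i$ for which $h_i(r)$ can cross $\partial A$ is bounded by the volume of the $r$-neighbourhood of $\partial A$, which scales as $c_2|\partial A|\,r$. After inserting the locality hypothesis, the total bound reduces to a one-dimensional sum over $r$. For finite range $R$ (part 3) the sum terminates and directly gives $b_1(R)\sim R^{D+1}$ from the $\log d_0$ factor and $b_{\alpha>1}(R)\sim R\,d_0^{2c_1 R^D}$ from the $d_0^{2c_1 R^D}$ factor. For power-law interactions and $\alpha=1$ (part 1) the resulting series is polynomial in $r$, and to obtain the sharp threshold $w>2D+1$ I would adapt the more refined scheme of Proposition~4 in \cite{Marien_Audenaert_Acoleyen_Verstraete_2016}: truncate $H$ at a range $R$, apply the finite-range bound of part~3, and optimise $R$ against the tail $\sum_{r>R}\bar h/r^w$, the cross-over condition being exactly where the $2D+1$ threshold enters. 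For stretched-exponential interactions and $\alpha>1$ with $w=D$ (part 2) the summand behaves as $\exp\!\bigl(-(r/\xi)^D+2c_1 r^D\log d_0\bigr)$, which converges precisely when $\xi^{-D}>c_1\log d_0$, yielding the critical length $\xi_c = (1/(c_1\log d_0))^{1/D}$.

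I expect the main obstacle to be the simultaneous bookkeeping of (i) the polynomial-in-$r$ geometric factors from counting boundary anchors and subsystem dimensions, (ii) the locality decay $\bar h/r^w$ or $\bar h e^{-(r/\xi)^w}$, and (iii) the doubly-exponential factor $d_0^{c r^D}$ that appears for $\alpha>1$; in particular, matching the precise $2D+1$ threshold in part~1 is not delivered by naive term-by-term summation and requires the truncate-then-optimise scheme of \cite{Marien_Audenaert_Acoleyen_Verstraete_2016}. The qualitative moral is the sharp asymmetry between parts 1 and 2: the Von Neumann bound only pays a logarithm of the support dimension, whereas the $\alpha>1$ bound from Lemma~\ref{lemma:alpha>1} pays the full support dimension, so polynomial spatial decay tames $S_1$ but only decay that out-races the half-ball volume growth $d_0^{r^D}$ can tame $S_{\alpha>1}$. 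This is exactly why no $|A|$-independent bound on $|S'_{\alpha>1}|$ is available in part 1 under only power-law locality.
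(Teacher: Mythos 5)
Your proposal follows essentially the paper's own route: Lemma~\ref{lemma:key} to reduce to boundary-crossing terms, a per-term application of the state-independent commutator bounds (Lemma~\ref{lemma:Vershynina_ineq} for $\alpha=1$, and Lemma~\ref{lemma:alpha>1}, whose $\Tr Y^{\alpha}$ factor cancels the denominator, for $\alpha\neq 1$), then a geometric count of contributing terms and a sum over $r$ controlled by the decay hypothesis. Parts 2 and 3 are handled exactly as in the paper, up to a factor of $2$: since the per-term cost for $\alpha\neq1$ is $d_0^{2c_1 r^D}$, your convergence condition in part 2 should read $\xi^{-D}>2c_1\log d_0$, matching the $\xi_c$ used in the paper's proof.

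The one claim to correct is that the $w>2D+1$ threshold in part 1 ``is not delivered by naive term-by-term summation'' and needs a truncate-then-optimise scheme. It is delivered by exactly that summation, and this is what the paper does: each crossing term contributes at most $18\,c_1\|h_i(r)\|\,r^D\log d_0$ (the $r^D$ coming from $\log D_{\Lambda_{\bar A}}$), the number of sites within distance $r$ of $\partial A$ is bounded (in the paper's bookkeeping) by $c_1|\partial A|\,r^D$, and inserting $\|h_i(r)\|\lesssim \bar h\,r^{-w}$ gives $|S_1'(t)|\lesssim \bar h\,|\partial A|\log d_0\sum_r r^{2D-w}$, which converges precisely when $w>2D+1$. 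With your sharper neighbourhood count $c_2|\partial A|\,r$ the same direct sum already converges for $w>D+2$, which is implied by $w>2D+1$ when $D\geq 1$, so no truncation of $H$ or optimisation over $R$ is required; indeed the truncation route would still need per-term bounds on the tail terms' contribution to $S_1'(t)$ and so collapses back to the direct sum. Apart from this unnecessary detour (and the constant in $\xi_c$), your argument matches the paper's proof.
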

\begin{proof}
    By Lemma~\ref{lemma:key}, we only need to bound the contributions from $h_{i}(r)$ with support intersecting both $A$ and $\bar A$. Take any such term, and let $\Lambda = \mathrm{supp} \left[h_i(r)\right]$. To make use of the existing bounds on entanglement capacity, we introduce the notation $\Lambda_A + \Lambda_{\bar A} = \Lambda$ and $R_A + R_{\bar A}= \Xi \setminus \Lambda$ so that $R_A + \Lambda_A = A$ and $R_{\bar A} + \Lambda_{\bar A} = \bar A$ (see Figure~\ref{fig:LatticeSets} for a visual representation of this decomposition). 
    \begin{figure}
        \centering
        \includegraphics[width = \textwidth/2]{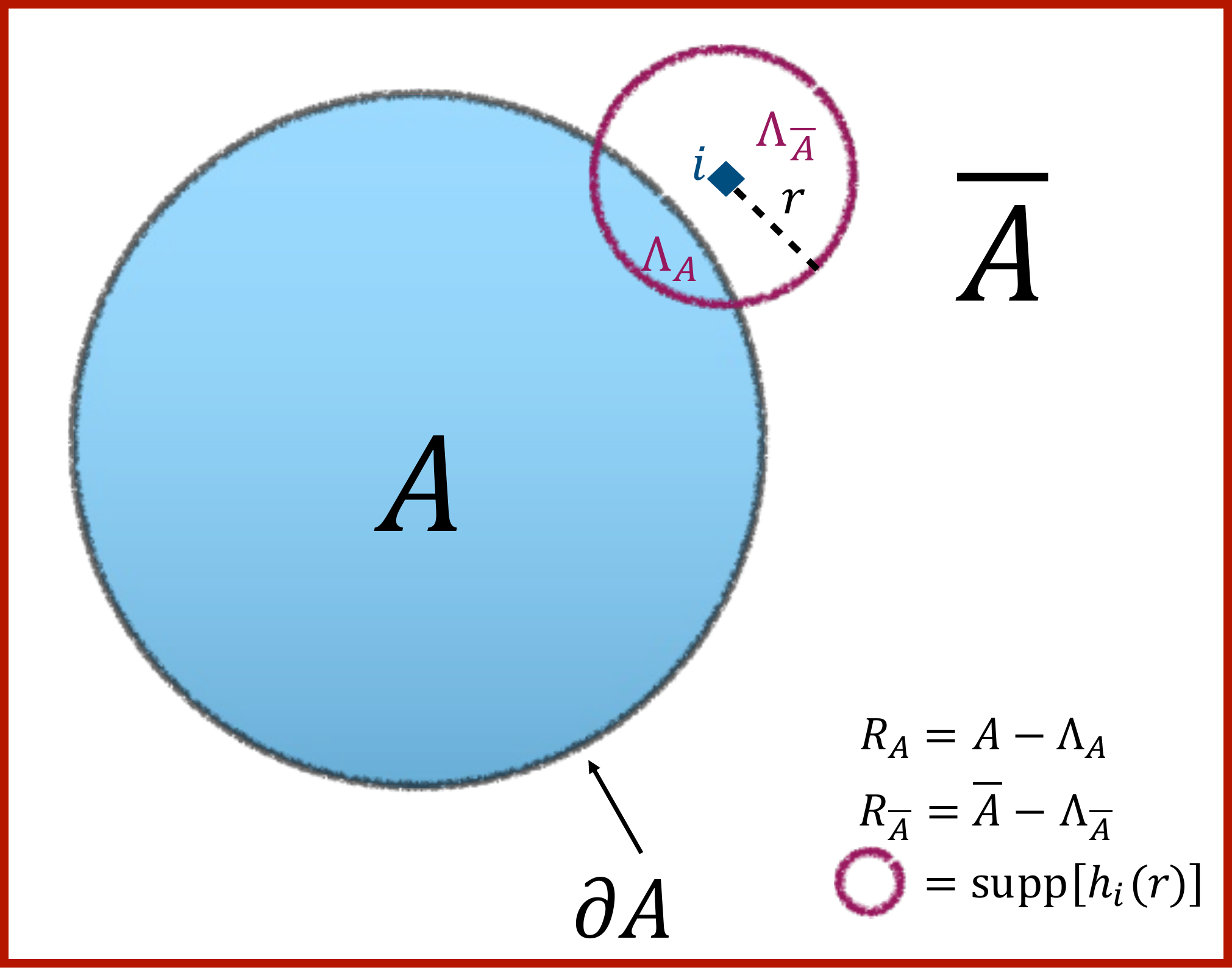}
        \caption{The red rectangle marks the boundary of the full lattice $\Xi$, which we partition into $A, \bar A$ subsystems. For each lattice site $i$, the support of the interaction term $h_i(r)$ is contained in a ball of radius $r+1$ around $i$. $\Lambda_A, \Lambda_{\bar A}$ are the intersections of the support with $A, \bar A$ respectively. Note that the interaction only contributes to the entanglement rate when $\Lambda_A, \Lambda_{\bar A}$ are both non-empty. Moreover, the number of sites contained in the support is bounded by $\mathcal{O}(r^D)$.}
        \label{fig:LatticeSets}
    \end{figure}
    First we specialize to the case of Von Neumann entropy and review the argument of Ref.~\cite{Marien_Audenaert_Acoleyen_Verstraete_2016}. For fix $i, r$ we need to bound
    \begin{equation}
        \Tr h_{i}(r) [\rho, \log \rho_A \otimes I_{\bar A}] = \Tr_{A + \Lambda_{\bar A}} h_{i}(r) [\rho_{A + \Lambda_{\bar A}}, \log \rho_A \otimes I_{\Lambda_{\bar A}}] = p^{-1} \Tr_{A + \Lambda_{\bar A}} h_{i}(r) [X, \log Y] \,,
    \end{equation}
    where $p = D_{\Lambda_{\bar A}}^{-2}$, $X = p\rho_{A + \Lambda_{\bar A}}$ and $Y = \sqrt{p} \rho_A \otimes I_{\Lambda_{\bar A}}$. By definition of density matrices, $\Tr_{A+\Lambda_{\bar A}} X = p$ and $\Tr_{A+\Lambda_{\bar A}} Y = 1$. Furthermore, since $\rho_{A+B} \leq D_B \rho_A \otimes I_B$ where $D_B$ is the dimension of the $B$ Hilbert space, we have $0 \leq X \leq Y \leq I$. This allows us to directly apply the operator inequalities in Lemma~\ref{lemma:Vershynina_ineq} and obtain\footnote{The optimal constant in the this inequality is actually 2 instead of 9, but since we are not concerned with numerical prefactors at this level of generality, we will not aim for the optimal bound here and in the rest of this section.}
    \begin{equation}
        |\Tr_{A + \Lambda_{\bar A}} h_{i}(r) [X, \log Y]| \leq ||h_{i}(r)|| \cdot ||[X,\log Y]||_1 \leq 9||h_{i}(r)|| p\log (\frac{1}{p})
    \end{equation}
    where we used the fact that $p \leq \frac{1}{2}$ for all nontrivial $\Lambda_{\bar A}$. This inequality implies that
    \begin{equation}
        |\Tr h_{i}(r) [\rho, \log \rho_A \otimes I_{\bar A}]| \leq 9||h_{i}(r)|| \log (\frac{1}{p}) = 18 ||h_{i}(r)|| \log D_{\Lambda_{\bar A}} \leq 18 c_1 ||h_{i}(r)|| \log d_0 \cdot r^D
    \end{equation}
    where $D$ is the spatial dimension of the lattice and $c_1$ is chosen so that the number of sites $i$ satisfying $d(i, \partial A) \leq r$ is upper bounded by $c_1 r^D$\footnote{This also implies that the number of sites in $\Lambda_{\bar A}$ is bounded by $c_1 r^D$ for every $\Lambda_{\bar A}$ that does not contain sites farther than $r$ apart. See Figure~\ref{fig:LatticeSets} for a visual guide.}. Summing over all contributing $h_{i}(r)$, we get 
    \begin{equation}
        \begin{aligned}
        |\frac{d S_1(t)}{dt}| &\leq 18c_1 \log d_0 \sum_{i} \sum_{r \geq d(i,\partial A)} ||h_i(r)||\, r^D \leq 18 c_1 \log d_0 \sum_r r^D \sum_{i, r \geq d(i, \partial A)} \frac{\bar h}{r^w}  \\
        &\leq 18 c_1^2 \bar h \log d_0 |\partial A| \sum_r r^{2D-w} \leq c\, \bar h\, |\partial A| \log d_0 \,.
        \end{aligned}
    \end{equation}
    The final sum over $r$ converges whenever $w > 2D+1$, which is the assumption we made. In the end, $c$ is a finite constant that depends on the lattice structure and the degree of locality, but independent of $d_0, \bar h$.

    Next, we generalize to $\alpha$-Renyi entropies. Following the same steps as before, it is easy to see that
    \begin{equation}
        \Tr h_{i}(r)[\rho, \rho_A^{\alpha-1} \otimes I_{\bar A}] = \Tr_{A+\Lambda_{\bar A}} h_{i}(r)[\rho_{A+\Lambda_{\bar A}}, \rho_A^{\alpha-1} \otimes I_{\Lambda_{\bar A}}] = D_{\Lambda_{\bar A}}^{1 + \alpha} \Tr_{A+\Lambda_{\bar A}} h_{i}(r) [X, Y^{\alpha-1}] \,,
    \end{equation}
    with the same definitions $p = D_{\Lambda_{\bar A}}^{-2}$, $X = p\rho_{A + \Lambda_{\bar A}}$ and $Y = \sqrt{p} \rho_A \otimes I_{\Lambda_{\bar A}}$. Similarly, the denominator has the structure $\Tr_A \rho_A^{\alpha} = \Tr_{A+\Lambda_{\bar A}} Y^{\alpha} \cdot D_{\Lambda_{\bar A}}^{\alpha-1}$. Hence, we obtain
    \begin{equation}
        \left|\frac{d S_{\alpha}(t)}{dt} \right| \leq |\frac{\alpha}{\alpha-1}| \left|\sum_{i} \sum_{r \geq d(i,\partial A)} p^{-1}\frac{\Tr_{A+\Lambda_{\bar A}} h_{i}(r) [X, Y^{\alpha-1}]}{\Tr Y^{\alpha}}\right| \,.
    \end{equation}
    Note that the existing operator inequalities in Lemma~\ref{lemma:Vershynina_ineq} (proven in Ref.~\cite{Vershynina_2019}) are not useful anymore. This is because the denominator in $|S'_{\alpha \neq 1}(t)|$ scales as $\Tr Y^{\alpha} \leq \Tr Y = 1$. Directly inverting this inequality gives a lower bound on $|S'_{\alpha \neq 1}(t)|$ rather than an upper bound. If we leave $\Tr Y^{\alpha}$ untouched, then the bound would be proportional to $\left[\Tr Y^{\alpha}\right]^{-1} \sim e^{(\alpha-1)S_{\alpha}(t)}$ which can become exponentially large in the size of subsystem $A$ when the $\alpha$-Renyi entropy $S_{\alpha}(t)$ has grown to be extensive. 
    
    To remove the factor of $\left[\Tr Y^{\alpha}\right]^{-1}$, we invoke Lemma~\ref{lemma:alpha>1} and obtain the stronger inequality
    \begin{equation}
        \left|\Tr_{A+\Lambda_{\bar A}} h_{i}(r) [X, Y^{\alpha-1}]\right| \leq 2 ||h_{i}(r)|| \Tr Y^{\alpha} \,.
    \end{equation}
    Part of this bound precisely cancels $\Tr Y^{\alpha}$ in the denominator. The remaining parts combine with some prefactors to give
    \begin{equation}
        \left|S'_{\alpha \neq 1}(t)\right| \leq \frac{2\alpha}{|\alpha-1|} \sum_r \sum_{i, d(i,\partial A) \leq r} D_{\Lambda_{\bar A}}^2 ||h_{i}(r)|| \leq \frac{2\alpha}{|\alpha-1|} |\partial A| \sum_r c_1 r^D d_0^{2 c_1 r^D} ||h_i(r)|| \,,
    \end{equation}
    where $c_1$ is chosen in the same way as before. Now, in order for the RHS to converge to a finite constant independent of $L$, the Hamiltonian must be at least superexponentially/exponentially local with
    \begin{equation}
        w = D \,, \quad \xi < \xi_c = \left(\frac{1}{2 c_1 \log d_0}\right)^{1/D} \,.
    \end{equation}
    When this decay condition is satisfied, we have
    \begin{equation}
        \left|S'_{\alpha \neq 1}(t)\right| \leq \frac{2\alpha}{|\alpha-1|} |\partial A| c \bar h \,,
    \end{equation}
    for some constant $c$. Redefining $c' = 2\alpha/(|\alpha-1|) c$ gives part 2 of the theorem. 
    
    Finally, for interactions with a finite range $R$, we can write
    \begin{equation}
        \begin{aligned}
            |S'_1(t)| &\leq \sum_i \sum_{d(i,\partial A) \leq R} 18 ||h_i(r)|| \log D_{\Lambda_{\bar A}}\,, \\
            |S'_{\alpha \neq 1}(t)| &\leq \frac{2\alpha}{|\alpha-1|} \sum_i \sum_{d(i, \partial A) \leq R} D_{\Lambda_{\bar A}}^2 ||h_i(r)|| \,,
        \end{aligned}
    \end{equation}
    where
    \begin{equation}
        ||h_i(r)|| \begin{cases} = 0 & r > R \\ \leq \bar h & r \leq R \end{cases} \,.
    \end{equation}
    Given the finite range of interactions, we know that $|\Lambda_{\bar A}| \leq c_1 R^D$. On the other hand, for sufficiently large $\partial A$, the number of sites $i$ satisfying $d(i,\partial A) \leq R$ is bounded by $c_2 |\partial A| R$ for some constant $c_2$ independent of the spatial dimension $D$. Given this stronger inequality, we conclude that
    \begin{equation}
        \begin{aligned}
            |S'_1(t)| &\leq 18 \left(c_2 |\partial A| R \, \bar h\right) \left(c_1 R^D \log d_0\right)\\
            |S'_{\alpha \neq 1}(t)| &\leq \frac{2\alpha}{|\alpha-1|} \left(c_2 |\partial A| R \, \bar h \right) \, \left(d_0^{2c_1 R^D}\right) \,.
        \end{aligned}
    \end{equation}
    Part 3 of the theorem is recovered when we set
    \begin{equation}
        b_1(R) = 18\,c_1\,c_2\, R^{D+1} \,, \quad b_{\alpha}(R) = \frac{2\alpha}{|\alpha-1|} c_2 \,R\, d_0^{2 c_1 R^D} \,.
    \end{equation}
\end{proof}
It is easy to generalize Theorem~\ref{thm:Bound} to lattice fermion Hamiltonians. For any bipartition of the lattice into $A + \bar A$, the Fock space factorizes as $\mathcal{H}_F = \mathcal{H}^{(A)}_F \otimes \mathcal{H}^{(\bar A)}_F$. The reduced density matrices and Renyi entropies can be defined relative to this tensor factorization and Lemma~\ref{lemma:key} continues to hold. The operator inequalities invoked in Theorem~\ref{thm:Bound} are also agnostic to the precise of choice of Hilbert space. Therefore, with the same assumptions on the locality of interactions, the analogue of Theorem~\ref{thm:Bound} holds for fermions as well. The story for bosons is more complicated because local operators on the bosonic Hilbert space do not have a well-defined operator norm. Indeed, arbitrarily fast spread of entanglement can be generated by Hamiltonians that couple nearest neighbor lattice sites with arbitrarily high boson occupation numbers. However, recent works have shown that for physically reasonable Hamiltonians and for states with low boson occupation numbers, the effective local Hilbert space can be truncated to a finite dimensional space up to small errors~\cite{TongPreskill_2021_boson_truncation}. Operator norms of bosonic operators in these reduced Hilbert spaces are bounded and the analogue of Theorem~\ref{thm:Bound} continues to hold. 
\begin{corollary}\label{cor:total_entgrowth}
    Consider the same setup as in Theorem~\ref{thm:Bound}. Then if the Hamiltonian is at least power-law local with decay exponent $w > 2D+1$, $S_{\alpha}(t) \leq c |\partial A| \bar h t$ for all $\alpha \geq 1$.
\end{corollary}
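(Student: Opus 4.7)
The plan is to reduce everything to the Von~Neumann case already established in Theorem~\ref{thm:Bound} part~1, and then invoke a general monotonicity property of Rényi entropies to bootstrap the bound up to arbitrary $\alpha \geq 1$. First I would handle $\alpha = 1$: by Theorem~\ref{thm:Bound} part~1, the instantaneous growth rate satisfies $|S_1'(t)| \leq c\, \bar h\, |\partial A| \log d_0$ for some constant $c$ depending on the lattice geometry and decay exponent $w$. Since this upper bound is time-independent (it depends only on $H$ and the geometry of the bipartition, not on $\rho(t)$), a straightforward integration from $0$ to $t$ gives $S_1(t) - S_1(0) \leq c\, \bar h\, |\partial A|\, t$, and absorbing the $\log d_0$ into the constant $c$ yields $S_1(t) \leq c\, \bar h\, |\partial A|\, t$ under the standing assumption that the initial state satisfies $S_1(0) = 0$ (a product state across the cut).

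For the $\alpha > 1$ case, I would appeal to the standard fact that $\alpha \mapsto S_\alpha(\rho)$ is monotonically non-increasing in $\alpha$ for any fixed density matrix $\rho$. This follows from the log-convexity of $\Tr \rho^\alpha$ in $\alpha$ (equivalently, it is a direct consequence of Jensen's inequality applied to the eigenvalues of $\rho_A(t)$). Applying this pointwise in $t$ to $\rho_A(t)$, we immediately obtain $S_\alpha(t) \leq S_1(t)$ for all $\alpha \geq 1$ and all $t \geq 0$. Chaining this with the bound from the previous paragraph gives $S_\alpha(t) \leq c\, \bar h\, |\partial A|\, t$ for all $\alpha \geq 1$, as claimed.

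There is no real obstacle here; the corollary is essentially a one-line consequence of the $\alpha = 1$ bound in Theorem~\ref{thm:Bound}. The only subtle point is that one cannot try to obtain this by integrating the instantaneous bound on $|S_\alpha'(t)|$ directly, because Theorem~\ref{thm:Bound} does \emph{not} supply such an $|A|$-independent bound for $\alpha > 1$ under only power-law locality (part~2 of the theorem requires at least stretched-exponential decay). The trick is that the \emph{integrated} quantity $S_\alpha(t)$ can be controlled even when the instantaneous rate $|S_\alpha'(t)|$ cannot, by bypassing $|S_\alpha'(t)|$ entirely and using monotonicity in $\alpha$. The same argument would also extend to Tsallis entropies in the $q \geq 1$ regime via the functional relation $T_q = \tfrac{1}{q-1}(1 - e^{(1-q)S_q})$ and the elementary estimate $1 - e^{-x} \leq x$, if one wished to state a parallel corollary.
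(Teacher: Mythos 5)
Your proposal is correct and follows essentially the same route as the paper's own proof: integrate the $\alpha=1$ bound from part~1 of Theorem~\ref{thm:Bound} in time and then use the monotonicity $S_{\alpha}(t) \leq S_{1}(t)$ for $\alpha \geq 1$, with the $\log d_0$ factor absorbed into the constant. Your explicit remarks about the initial condition $S_1(0)=0$ and about why one cannot integrate an instantaneous $\alpha>1$ bound under mere power-law locality are accurate refinements of what the paper leaves implicit.
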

\begin{proof}
    The $\alpha$-Renyi entropies satisfy the monotonicity property: $S_{\alpha}(t) \geq S_{\beta}(t)$ whenever $\alpha \leq \beta$. By integrating part 1 of Theorem~\ref{thm:Bound}, we find that $S_{1}(t) \leq c \,|\partial A| \bar h\, t$ for all power-law local Hamiltonians with $w > 2D+1$. Hence $S_{\alpha}(t) \leq c\, |\partial A| \bar h\, t$ for all $\alpha > 1$.
\end{proof}

\subsection{Extension to systems with k-locality}

The entanglement growth bounds in Theorem~\ref{thm:Bound} admit a straightforward generalization to k-local systems which may or may not have geometric locality. More precisely, consider a quantum qudit system with $V$ sites and local Hilbert space dimension $d_0$. We say a Hamiltonian is \textbf{k-local} if $H = \sum_{i_1,\ldots i_k \in \Xi} h_{i_1,\ldots, i_k}$ where the support of $h_{i_1,\ldots, i_k}$ is contained in $\{i_1,\ldots, i_k\}$. By Lemma~\ref{lemma:key}, we have
\begin{equation}
    \left|S'_{\alpha}(t)\right| \leq \frac{\alpha}{|\alpha-1|}\left| \frac{\Tr (H_{\partial} [\rho(t), \rho_A(t)^{\alpha-1} \otimes I_{\bar A}])}{\Tr_A \rho_A(t)^{\alpha}} \right| \,,
\end{equation}
where 
\begin{equation}
    H_{\partial} = \sum_{\substack{\{i_1,\ldots,i_k\} \cap A \neq \emptyset \\ \{i_1,\ldots,i_k\} \cap \bar A \neq \emptyset}} h_{i_1,\ldots, i_k} \,.
\end{equation}
Now following the same strategy as in Theorem~\ref{thm:Bound}, we have another immediate corollary:
\begin{corollary}\label{cor:Bound_klocal}
    Consider a $k$-local Hamiltonian $H = \sum_{i_1,\ldots i_k \in \Xi} h_{i_1,\ldots, i_k}$ with $||h_{i_1,\ldots, i_k}|| \leq \bar h$ for every choice of $i_1,\ldots, i_k$. Then we have the bounds
    \begin{equation}
        |S'_{\alpha}(t)| \leq V(H_{\partial}) \bar h \begin{cases} c (k-1) \log d_0 & \alpha = 1 \\ c' d_0^{2(k-1)} & \alpha \neq 1 \end{cases} \,,
    \end{equation}
    where $V(H_{\partial})$ is the number of distinct nonvanishing terms $h_{i_1,\ldots, i_k}$ with $\{i_1,\ldots,i_k\} \cap A, \{i_1,\ldots,i_k\} \cap \bar A \neq \emptyset$. 
\end{corollary}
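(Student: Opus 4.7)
The plan is to imitate the proof of Theorem~\ref{thm:Bound} with geometric locality replaced by $k$-locality. Starting from Lemma~\ref{lemma:key}, I would express $|S'_\alpha(t)|$ as a linear combination (over contributing terms $h_{i_1,\ldots,i_k}$) of expressions of the form $\Tr(h_\Lambda[\rho,\rho_A^{\alpha-1}\otimes I_{\bar A}])$, and bound each term separately. For each contributing term let $\Lambda=\{i_1,\ldots,i_k\}$, $\Lambda_A=\Lambda\cap A$, $\Lambda_{\bar A}=\Lambda\cap\bar A$; the only fact about locality that enters is $|\Lambda_{\bar A}|\leq k-1$, so $D_{\Lambda_{\bar A}}\leq d_0^{k-1}$. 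By tracing out the complement of $A\cup\Lambda_{\bar A}$ (exactly as in Theorem~\ref{thm:Bound}), each term reduces to a trace over $A+\Lambda_{\bar A}$ with the same rescaling $p=D_{\Lambda_{\bar A}}^{-2}$, $X=p\,\rho_{A+\Lambda_{\bar A}}$, $Y=\sqrt{p}\,\rho_A\otimes I_{\Lambda_{\bar A}}$ satisfying $0\leq X\leq Y\leq I$.

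For $\alpha=1$, I would apply the Vershynina inequality (Lemma~\ref{lemma:Vershynina_ineq}) with $f=\log$, giving
\begin{equation*}
\bigl|\Tr h_{\Lambda}[\rho,\log\rho_A\otimes I_{\bar A}]\bigr|\;\leq\;2\,\|h_{\Lambda}\|\log D_{\Lambda_{\bar A}}\;\leq\;2(k-1)\,\bar h\,\log d_0.
\end{equation*}
Summing this over the $V(H_{\partial})$ contributing terms immediately gives $|S'_1(t)|\leq c(k-1)\,\bar h\,\log d_0\cdot V(H_{\partial})$.

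For $\alpha\neq1$, the Vershynina bound is not useful because $\Tr Y^{\alpha}$ in the denominator of Lemma~\ref{lemma:key} can be exponentially small. Here I would invoke Lemma~\ref{lemma:alpha>1}, which gives $\|XY^{\alpha-1}\|_1\leq \Tr Y^{\alpha}$, so that the denominator cancels exactly and only the prefactors from unwinding the rescaling remain:
\begin{equation*}
\left|\frac{\Tr h_{\Lambda}[\rho,\rho_A^{\alpha-1}\otimes I_{\bar A}]}{\Tr_A\rho_A^{\alpha}}\right|\;\leq\;2\,\|h_{\Lambda}\|\,D_{\Lambda_{\bar A}}^{2}\;\leq\;2\,\bar h\,d_0^{\,2(k-1)}.
\end{equation*}
Summing over the $V(H_{\partial})$ contributing $k$-tuples and absorbing the $\alpha/|\alpha-1|$ prefactor into the constant $c'$ yields the second bound.

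The proof is essentially bookkeeping; the only substantive input beyond Lemma~\ref{lemma:key} is the choice of operator inequality to control the numerator (Lemma~\ref{lemma:Vershynina_ineq} for $\alpha=1$, Lemma~\ref{lemma:alpha>1} for $\alpha\neq1$). There is no real obstacle here, but the conceptual point worth flagging is that without geometric locality one cannot convert the sum over contributing terms into a boundary-area factor $|\partial A|$, so the bound must be stated in terms of the combinatorial quantity $V(H_{\partial})$ counting the number of interaction terms straddling the cut. This is what distinguishes Corollary~\ref{cor:Bound_klocal} from Theorem~\ref{thm:Bound}, and also what motivates the further hypothesis in Corollary~\ref{cor:Bound_klocal+geolocal} that restores the $|\partial A|$ scaling once one additionally assumes power-law decay of interactions.
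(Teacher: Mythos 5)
Your proposal is correct and follows essentially the same route as the paper: reduce to the boundary terms via Lemma~\ref{lemma:key}, bound each straddling term with Lemma~\ref{lemma:Vershynina_ineq} (for $\alpha=1$) or Lemma~\ref{lemma:alpha>1} (for $\alpha\neq1$) using the same rescaled $X,Y$ as in Theorem~\ref{thm:Bound}, then use $D_{\Lambda_{\bar A}}\leq d_0^{k-1}$ and sum over the $V(H_{\partial})$ contributing terms. The only differences are in inessential constants (absorbed into $c,c'$), so nothing further is needed.
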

\begin{proof}
    The early part of the proof essentially repeats the argument in Theorem~\ref{thm:Bound}. The only difference comes from the analysis of sums over terms in $H_{\partial}$
    \begin{equation}
        |S'_{\alpha}(t)| \leq \sum_{\substack{\{i_1,\ldots,i_k\} \cap A \neq \emptyset \\ \{i_1,\ldots,i_k\} \cap \bar A \neq \emptyset}} ||h_{i_1,\ldots, i_k}|| \begin{cases} c\,\log \left( D_{\mathrm{supp}\left[h_{i_1,\ldots, i_k}\right] \cap \bar A}\right) & \alpha = 1 \\ c'\, \left(D_{\mathrm{supp}\left[h_{i_1,\ldots, i_k}\right] \cap \bar A}\right)^2 & \alpha \neq 1 \end{cases} \,,
    \end{equation}
    where $c, c'$ are constants. Now for $k$-local system, at most $k-1$ of the sites $\{i_1,\ldots, i_k\}$ can be in $\bar A$. Hence $D_{\mathrm{supp}\left[h_{i_1,\ldots, i_k}\right] \cap \bar A} \leq d_0^{k-1}$ and the corollary follows immediately.
\end{proof} 
In many interesting physical systems, the interactions between local degrees of freedom simultaneously satisfy geometric locality and k-locality. One prominent example is the Hamiltonian for electrons in a solid, where the dominant Coulomb repulsion is geometrically power-law local and $k$-local with $k = 2$. In this scenario, we can combine Corollary~\ref{cor:Bound_klocal} and Theorem~\ref{thm:Bound} to obtain a stronger set of bounds. To get there, we first define the most general form of a $k$-local Hamiltonian with geometric locality:
\begin{definition}
    For any set of sites $\{i_1,\ldots, i_k\}$ on the lattice $\Xi$ equipped with a metric $d$, define the diameter of the set by
    \begin{equation}
        \mathrm{diam} (\{i_1,\ldots, i_k\}) = \max_{a,\,b \in \{1,\,\ldots\,, k\}} d(i_a, i_b) \,.
    \end{equation}
    We say that a $k$-local Hamiltonian of the form 
    \begin{equation}
        H = \sum_{i_1,\ldots, i_k} h_{i_1,\ldots, i_k}
    \end{equation}
    is power-law local with decay exponent $w$ if 
    \begin{equation}
        ||h_{i_1,\ldots, i_k}|| \leq \frac{\bar h}{\mathrm{diam} (\{i_1,\ldots, i_k\})^w}
    \end{equation}
    for some constant $\bar h$. 
\end{definition}
Given this definition, we can state the bounds:
\begin{corollary}\label{cor:Bound_klocal+geolocal}
    Consider the same setup as in Theorem~\ref{thm:Bound} with $\partial A$ a spatially connected boundary separating $A$ and $\bar A$. Given a $k$-local Hamiltonian which is also geometrically power-law local with decay exponent $w > Dk$, we have the following bounds
    \begin{equation}
        |S'_{\alpha}(t)| \leq \begin{cases} c (k-1) \bar h \log d_0 & \alpha = 1 \\ c' d_0^{2(k-1)} \bar h & \alpha \neq 1 \end{cases} \,.
    \end{equation} 
\end{corollary}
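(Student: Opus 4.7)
The plan is to treat this corollary as a quantitative refinement of Corollary~\ref{cor:Bound_klocal}, upgrading the abstract factor $V(H_{\partial})$ to an explicit boundary-dependent bound by exploiting power-law decay of the couplings. The geometric structure enters only at the very end, when we replace a raw counting of crossing tuples by a summable estimate.

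First, I would start from Lemma~\ref{lemma:key} and decompose $H_{\partial}$ as a sum over $k$-tuples $\{i_1,\ldots,i_k\}$ whose support intersects both $A$ and $\bar A$. Linearity lets me bound each contribution independently. For a single crossing tuple, the per-term argument already used in the proof of Corollary~\ref{cor:Bound_klocal} applies verbatim: since at most $k-1$ sites of the support lie on the $\bar A$ side, $D_{\mathrm{supp}[h]\cap \bar A}\leq d_0^{k-1}$, giving a contribution of at most $c(k-1)\log d_0 \cdot \|h_{i_1,\ldots,i_k}\|$ for $\alpha=1$ (via the $\log(1/p)$ bound) and at most $c' d_0^{2(k-1)} \|h_{i_1,\ldots,i_k}\|$ for $\alpha\neq 1$ (via Lemma~\ref{lemma:alpha>1}). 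At this stage all the $\alpha$- and $d_0$-dependent factors have been extracted as a single prefactor, and what remains is to sum $\|h_{i_1,\ldots,i_k}\|$ over crossing tuples.

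To do that sum, I would enumerate crossing tuples by their diameter $r$. Using $\|h_{i_1,\ldots,i_k}\|\leq \bar h / \mathrm{diam}^w$, any crossing tuple of diameter $\leq r$ must contain sites on both sides of $\partial A$ separated by at most $r$, forcing the entire tuple into a ball of radius $r$ anchored at some site within distance $r$ of $\partial A$. In a $D$-dimensional lattice with a codimension-$1$ boundary of size $|\partial A|$, the number of anchor candidates is $O(r|\partial A|)$, and the remaining $k-1$ sites contribute $O(r^{D(k-1)})$ choices, so the number of crossing tuples with diameter $\leq r$ is $O(|\partial A|\,r^{D(k-1)+1})$. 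The resulting sum is controlled by $|\partial A|\sum_r r^{D(k-1)+1-w}$, which converges whenever $w>D(k-1)+2$, and in particular under the hypothesis $w>Dk$ in the relevant regime. Plugging back the per-term estimates produces a final bound proportional to $\bar h\,|\partial A|$ times the $\alpha$-dependent prefactor in the statement.

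The main obstacle is the combinatorial step: I have to argue carefully that the counting gives a factor of $|\partial A|$ rather than $|A|$, and that the overcounting (each tuple is admissible under several choices of anchor) only contributes a harmless constant that can be absorbed into $c, c'$. The remaining ingredients are direct re-use of Lemma~\ref{lemma:key}, the $k$-local per-term inequalities from Corollary~\ref{cor:Bound_klocal}, and the geometric counting template already used in the proof of Theorem~\ref{thm:Bound}, so once the anchor-based enumeration is in place the bound follows by substituting the per-term estimates into the convergent geometric sum.
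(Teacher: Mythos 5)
Your overall route is the same as the paper's: Lemma~\ref{lemma:key}, then the per-term estimates already used in Corollary~\ref{cor:Bound_klocal} (the logarithmic bound from Lemma~\ref{lemma:Vershynina_ineq} for $\alpha=1$, Lemma~\ref{lemma:alpha>1} for $\alpha\neq 1$, together with $D_{\mathrm{supp}[h]\cap\bar A}\leq d_0^{k-1}$), and finally a sum over boundary-crossing $k$-tuples organized by diameter, controlled by the power-law decay. The genuine gap is in the counting/convergence step. You bound the number of crossing tuples of diameter \emph{at most} $r$ by $O(|\partial A|\,r^{D(k-1)+1})$ and then use this cumulative count as the weight at diameter $r$, so your sum is $|\partial A|\sum_r r^{D(k-1)+1-w}$, requiring $w>D(k-1)+2$. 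That condition is \emph{not} implied by the stated hypothesis $w>Dk$ when $D=1$: there it reads $w>k+1$ versus the hypothesis $w>k$, so the hedge ``in the relevant regime'' leaves the one-dimensional case (which is squarely within the corollary's scope) unproven. The paper avoids this loss by counting the tuples with diameter \emph{exactly} $r$; pinning the diameter removes one radial degree of freedom, giving a shell count $\sim r^{Dk-1}$, and then $\sum_r r^{Dk-1-w}$ converges precisely when $w>Dk$. Equivalently, you can keep your cumulative count and do summation by parts: writing $N(r)$ for the number of crossing tuples of diameter at most $r$, one has $\sum_r \bigl(N(r)-N(r-1)\bigr)r^{-w}=\sum_r N(r)\bigl(r^{-w}-(r+1)^{-w}\bigr)\lesssim |\partial A|\sum_r r^{D(k-1)-w}$, which converges once $w>D(k-1)+1$, and this \emph{is} implied by $w>Dk$ for every $D\geq 1$. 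With that repair your argument goes through.

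A secondary remark: your final bound carries the factor $|\partial A|$, while the corollary as printed does not (the paper's own shell count $c_k r^{Dk-1}$ silently drops the boundary factor, even though the introduction's summary of this very result includes $|\partial A|$). Your $|\partial A|$-proportional version is the defensible one for $D\geq 2$; just be aware that what you establish is the statement with the boundary factor, matching the paper's introduction rather than the literal display in the corollary.
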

\begin{proof}
    Applying the operator inequalities Lemma~\ref{lemma:Vershynina_ineq} and Lemma~\ref{lemma:alpha>1} directly leads to 
    \begin{equation}
        |S'_{\alpha}(t)| \leq \sum_{\substack{\{i_1,\ldots,i_k\} \cap A \neq \emptyset \\ \{i_1,\ldots,i_k\} \cap \bar A \neq \emptyset}} ||h_{i_1,\ldots, i_k}|| \begin{cases} c\,\log \left(D_{\mathrm{supp}\left[h_{i_1,\ldots, i_k}\right] \cap \bar A}\right) & \alpha = 1 \\ c'\, \left(D_{\mathrm{supp}\left[h_{i_1,\ldots, i_k}\right] \cap \bar A}\right)^2 & \alpha \neq 1 \end{cases} \,,
    \end{equation}    
    Using $D_{\mathrm{supp}\left[h_{i_1,\ldots, i_k}\right]} \leq d_0^{k-1}$, we deduce that
    \begin{equation}
        |S'_{\alpha}(t)| \leq \sum_{\substack{\{i_1,\ldots,i_k\} \cap A \neq \emptyset \\ \{i_1,\ldots,i_k\} \cap \bar A \neq \emptyset}} ||h_{i_1,\ldots, i_k}|| \begin{cases} c\,(k-1) \log d_0 & \alpha = 1 \\ c'\, d_0^{2(k-1)} & \alpha \neq 1 \end{cases} \,.
    \end{equation}
    We now make a simple geometric observation. Suppose that $\{i_1,\ldots, i_k\}$ intersects with both $A$ and $\bar A$ and there exists some $a$ for which $d(i_a, \partial A) > r$. Then there must be some choice of $b$ such that $i_a, i_b$ are on opposite sides of $\partial A$. Hence, any path between $i_a, i_b$ must intersect $\partial A$ and we have a chain of inequalities
    \begin{equation}
        \mathrm{diam}(i_1,\ldots, i_k) \geq d(i_a, i_b) \geq d(i_a, \partial A) + d(i_b, \partial A) > r \,.
    \end{equation}
    By this observation, we can replace the sum over sites by a sum over $r = \mathrm{diam}(i_1,\ldots, i_k)$ with the restriction that $d(i_a, \partial A) \leq r$ for all $a$. For fix $r$, the number of $\{i_1,\ldots, i_k\}$ satisfying $d(i_a, \partial A) \leq r$ along with the constraint $\mathrm{diam}(i_1,\ldots, i_k) = r$ is upper bounded by $c_k r^{Dk-1}$ for some $k$-dependent constant $c_k$. Thus, 
    \begin{equation}
        \sum_{\substack{\{i_1,\ldots,i_k\} \cap A \neq \emptyset \\ \{i_1,\ldots,i_k\} \cap \bar A \neq \emptyset}} ||h_{i_1,\ldots, i_k}|| \leq \sum_{r = \mathrm{diam}(i_1,\ldots, i_k)} \sum_{d(i_a,\partial A) \leq r} \frac{\bar h}{\mathrm{diam}(i_1,\ldots, i_k)^{w}} \leq \bar h \sum_r c_k r^{Dk-w-1} \,.
    \end{equation}
    Demanding convergence of the sum over $r$, we recover the constraint $w > Dk$. 
\end{proof}
The key takeaway from this new bound is that when $k$-locality is present, power-law geometric locality is sufficient for proving an $|A|$-independent bound on $|S'_{\alpha}(t)|$. This is in sharp contrast to the case with only geometric locality (see Theorem~\ref{thm:Bound}). 

\ignore{
\subsection{Extension to bosons and fermions}

So far, we have formulated our bounds in the context of qudit systems, where Hilbert spaces associated with subregions factorize. However, in many interesting physical systems, such a factorization may not exist. In this section, we extend our bounds to bosonic/fermionic lattice models. \ZS{comment more on why this is enough}

Let us consider some lattice with $N$ sites and no geometric structure a priori. On every site $i$ there are creation and annihilation operators $c^{\dagger}_i, c_i$ satisfying the standard fermion algebra (the generalization to fermions with additional internal indices is straightforward and will not be treated here)
\begin{equation}
    \{c_i, c_j\} = \{c^{\dagger}_i, c^{\dagger}_j\} = 0 \quad \{c_i, c^{\dagger}_j\} = \delta_{ij}  \,.
\end{equation}
A general Hamiltonian can be written as a double sum over lattice sites $i$ and subsets $\Lambda$ containing $i$
\begin{equation}
    H = \sum_{i, \Lambda} h_{i,\Lambda} \,,
\end{equation}
where each $h_{i,\Lambda}$ is a product of an even number of creation and annihilation operators supported in $\Lambda$. 

The next step is to define and study the subsystem Renyi entropies. Suppose we decompose the lattice into $A, \bar A$. Then the Fock space of the fermions factorizes as $\mathcal{H}_F = \mathcal{H}^{(A)}_F \otimes \mathcal{H}^{(\bar A)}_F$. Therefore, we can define the Renyi entanglement entropies as
\begin{equation}
    S^{(A)}_{\alpha}(t) =\frac{1}{1-\alpha} \log \Tr_A [\Tr_{\bar A} \rho(t)]^{\alpha} 
\end{equation}
For fixed particle number $N$, the bosonic/fermionic Hilbert space can be constructed as 
\begin{equation}
    \mathcal{H}_{\pm, N} = S_{\pm} \mathcal{H}^{\otimes N} \,,
\end{equation}
where $S_{\pm}$ correspond to the symmetrization/anti-symmetrization operator. }

\section{Illustrating the bounds in concrete models}\label{sec:new_bound_example}

In this section, we compare the bounds in Theorem~\ref{thm:Bound} and Corollaries~\ref{cor:total_entgrowth}, \ref{cor:Bound_klocal}, \ref{cor:Bound_klocal+geolocal} with the dynamics of $|S_{\alpha}(t)|$ in several concrete models with varying degrees of locality. More precisely, we want to understand how the asymptotic scaling of $|S'_{\alpha}(t)|$ with $d_0$, the interaction range $R$, and the system size $|A|$ compares with the scaling of $|S'_1(t)|$. We provide both analytic and numerical evidence that all three models obey our bounds, although none of them comes close to saturating the bounds. Possible routes to tighter bounds/saturation examples will be discussed along the way. 

\subsection{SYK Model}

We first consider the SYK model, which is a k-local but spatially non-local system described by the Hamiltonian
\begin{equation}
    H_{\rm SYK} = \sum_{i,j,k,l = 1}^N J_{ijkl} \chi_i \chi_j \chi_k \chi_l \,.
\end{equation}
Here, $\chi_i$ are $N$ flavors of Majorana fermions and $J_{ijkl}$ are a set of Gaussian i.i.d. random variables with mean zero and variance $3!J^2/N^3$. Since each interaction term in the Hamiltonian only involves four fermion operators, this model is k-local with $k = 4$. To study entanglement growth, we divide the $N$ Majorana fermions into two groups with $M, N-M$ fermions respectively. For analytic calculations, it is convenient to choose Kourkoulou-Maldacena (KM) pure states. To define this class of states, we first build $N/2$ complex fermions out of $N$ Majorana fermions via the mapping $c_j = \frac{\chi_{2j-1} + i \chi_{2j}}{2}$ and then define a set of basis states that simultaneously diagonalize the commuting number operators $\{n_j = c^{\dagger}_j c_j\}$
\begin{equation}
    (2n_j - 1) \ket{\{s\}} = s_j \ket{\{s\}}\,, \quad s_j = \pm 1 \,.
\end{equation}
For every inverse temperature $\beta$ and every string $\{s\}$, the corresponding KM state is then
\begin{equation}
    \ket{\mathrm{KM}(\{s\}, \beta)} = e^{-\beta H/2} \ket{\{s\}} \,.
\end{equation}
Starting from such an initial state, we expect the $\alpha$-Renyi entropies to grow rapidly and eventually saturate to the thermal entropy associated with the energy density of $\ket{\mathrm{KM}(\{s\}, \beta)}$. The detailed dynamics of the $\alpha$-Renyi entropies have been worked out both analytically and numerically in Ref.~\cite{zhang2020_SYKentanglement}, with the general conclusion that
\begin{equation}\label{eq:example_SYK}
    S_{\alpha}(t) \approx c N J t \quad \forall \alpha \,,\quad \forall t \ll t_{\rm saturation} \,.
\end{equation}
From \eqnref{eq:example_SYK}, we see that the entanglement growth rate is linear in both $N$ and $J$. We now ask how this linear scaling compares with the general bounds. Consider for simplicity a bipartition where $M = N/2$. Let $H_{\partial}$ be the part of the Hamiltonian that acts on both subsystems. Then $H - H_{\partial}$ consists terms in the Hamiltonian that have $i,j,k,l < N/2$ or $i,j,k,l > N/2$. This means $S(H_{\partial}) \approx N^4 - (\frac{N}{2})^4 \cdot 2 = \mathcal{O}(N^4)$. Since $\ev{J_{ijkl}^2} \approx \frac{3!J^2}{N^3}$, we conclude that the average operator norm of each term in $H_{\partial}$ is $\mathcal{O}(N^{-3/2})$. Using this estimate in Corollary~\ref{cor:Bound_klocal}, we find
\begin{equation}
    \left|S'_{\alpha}(t)\right| \leq \mathcal{O}(N^{5/2}) \cdot J \quad \forall \alpha
\end{equation}
Therefore, up to an $\mathcal{O}(1)$ multiplicative prefactor, the bound is tight in terms of $J$ scaling but loose in terms of $N$ scaling. The looseness of the bound makes sense intuitively. The $\mathcal{O}(N^{5/2})$ bound assumes complete constructive interference between all terms that contribute to $S'_{\alpha}(t)$. In the SYK model, this assumption is not valid and there are significant destructive interferences that reduce the naive expectation $\mathcal{O}(N^{5/2})$ down to $\mathcal{O}(N)$. It would be interesting to understand if this destructive interference is generic for strong quenched randomness in the couplings. If so, it is likely that our bounds can be improved with additional assumptions on the statistics of random couplings in the $k$-local interactions. 

\subsection{Mixed field Ising model}

In the previous subsection, we considered entanglement growth in the SYK model, which preserves k-locality but not spatial locality. Now, we shift gears and consider a paradigmatic spatially local chaotic system oftened referred to as the mixed field Ising model (MFIM). With open boundary conditions, the 1D MFIM Hamiltonian takes the form
\begin{equation}
    H_{MFIM} = g \sum_{i=1}^L X_i + h \sum_{i=1}^{L} Z_i + J \sum_{i=1}^{L-1} Z_i Z_{i+1} \,,
\end{equation}
where $X_i,Y_i,Z_i$ are onsite Pauli operators. If we take $\partial A$ to be the spatial cut between sites $i$ and $i+1$, then the only term in the Hamiltonian that crosses the cut is $J Z_i Z_{i+1}$. This means that the Hamiltonian is strictly finite-range with $R = 2$ and $\bar h = J$. Rerunning the proof of Theorem~\ref{thm:Bound} with this single interaction term and using the optimal constants, we obtain finite bounds
\begin{equation}\label{eq:finite_bound_MFIM}
    |S'_1(t)| \leq 2 J \log 2 \,, \quad |S'_{\alpha}(t)| \leq \frac{8\alpha}{|\alpha-1|} J \,,
\end{equation}
The important qualitative feature is that both bounds scale linearly with the coupling strength $J$. To compare these bounds with numerics, we exactly diagonalized a MFIM with $L = 16,\, g = -1.05,\, h = 0.5$ and varying values of $J$. For each $J$, we sampled 100 random product states and computed the half-chain Von Neumann entropy $S_{1}(t)$ as a function of time. The ensemble-averaged entropy growth curves are shown in Fig.~\ref{fig:MFIM_entgrowth}.
\begin{figure*}
    \centering
    \includegraphics[width = \textwidth]{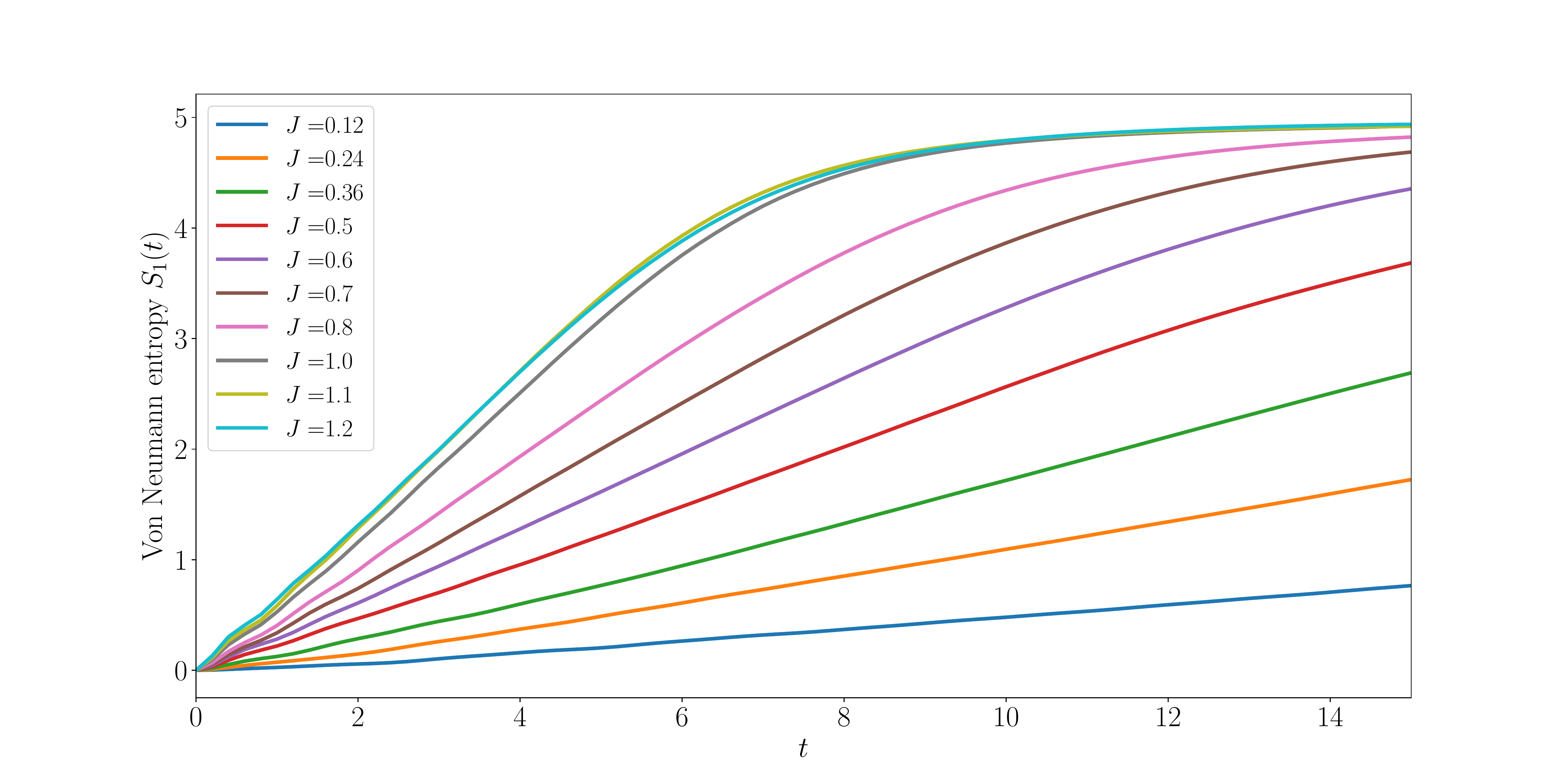}
    \caption{Time-dependence of the Von Neumann entanglement entropy for different values of $J$. The dynamics are generated by the MFIM with $L = 16$ and varying values of $J$. The entanglement entropy is calculated with respect to the middle cut. For each $J$, the plotted entropy growth curve is an average over 100 random product states. Note that after an initial transient, the entropy grows linearly until saturation.}
    \label{fig:MFIM_entgrowth}
\end{figure*}
After an early-time non-universal transient, the Von Neumann entropy grows linearly until saturation. Within the linear growth regime, we can extract the slope $s = |S'_{1}(t)|$ and the entanglement velocity $v_E = |S'_1(t)|/(\bar h \log d_0) = s(J)/(J \log d_0)$ as a function of the coupling strength $J$. The comparison plots are shown in Fig.~\ref{fig:MFIM_bound_comparison}. 

A number of interesting features can be identified. At small $J$, the empirical growth rate scales linearly with $J$, but with a slope that is smaller than the bound. Some curvature develops in the range $J \in [0.5,0.9]$. At even larger $J$, $s(J)$ appears to plateau and even begin to trend downward for $J > 1.1$. This downward trend is surprising and we do not have a physical explanation for it. In future works, it would be interesting to understand whether there exists models for which the linear scaling $s(J) \sim J$ extends to arbitrarily large values of $J$ in the thermodynamic limit (or more ambitiously, for which $s(J)$ asymptotically saturates the bound).
\begin{figure}[ht]
    \centering
    \includegraphics[width = \textwidth/2-2pt]{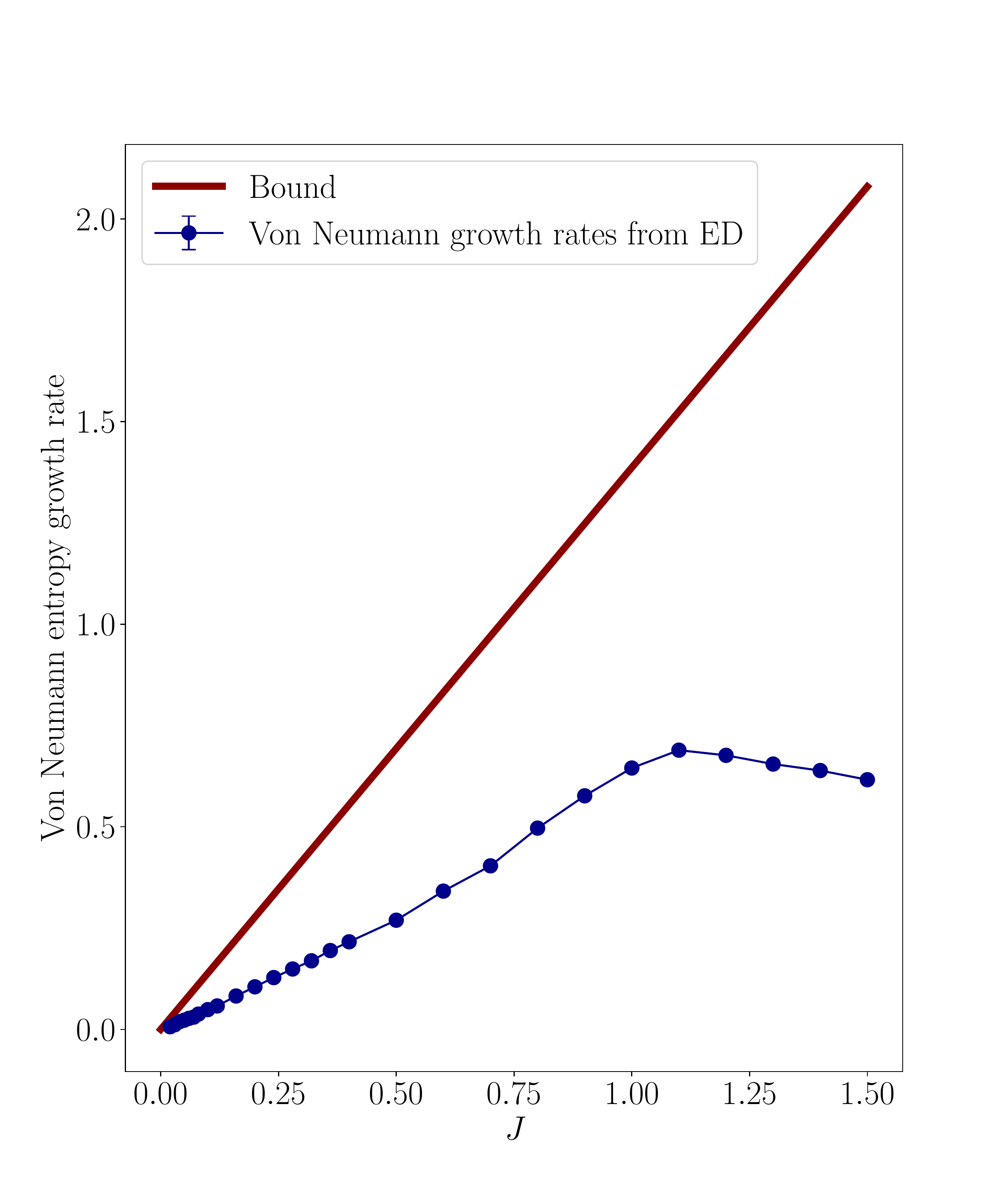}
    \includegraphics[width = \textwidth/2-2pt]{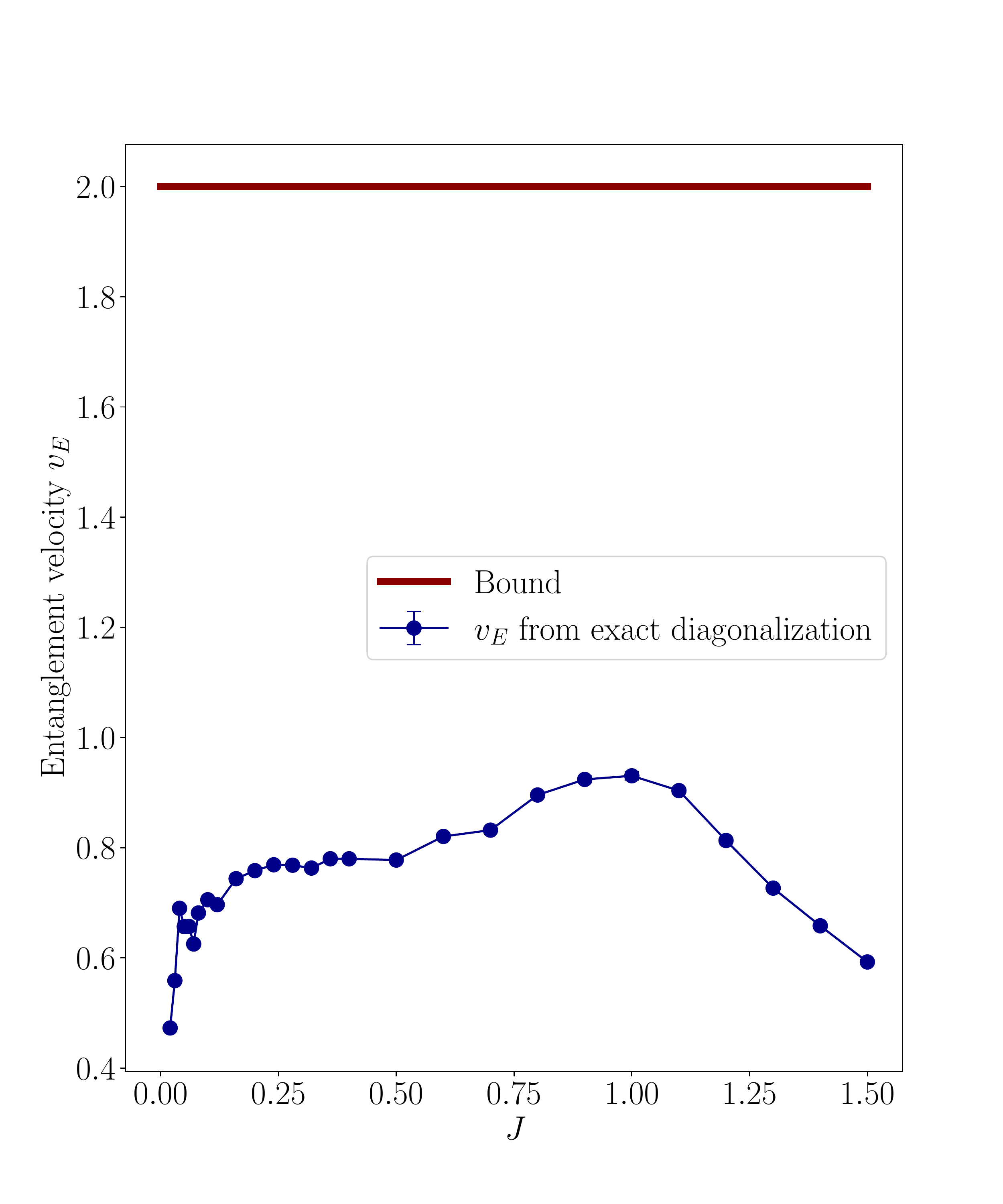}
    \caption{J-dependence of the Von Neumann entropy rate $|S'_{1}(t)|$ and the entanglement velocity $v_E$. Numerically, we take the curves in Fig.~\ref{fig:MFIM_entgrowth}, and fit the intermediate time regime to a line (the early time oscillatory regime and the late time saturation regime are nonlinear and hence excluded from the fit). The slope of each curve is the growth rate $|S'_{1}(t)|$ and the entanglement velocity $v_E = |S'_1(t)|/(J \log 2)$. The error bars for the growth rates are on the order of $10^{-3} \sim 10^{-4}$ and hence invisible on the scale of the plot.}
    \label{fig:MFIM_bound_comparison}
\end{figure}
The Renyi entropies in this model for integer $\alpha > 1$ also follow an approximate linear growth in the accessible range of $L$, though the cumulative growth at any $t$ is smaller than $S_1(t) - S_1(0)$ due to Corollary~\ref{cor:total_entgrowth}. Unlike the case of random GUE Hamiltonians considered in Section~\ref{sec:randomGUE_example}, we do not find any singularities in $|S'_{\alpha}(t)|$ as a function of $t$. This is consistent with \eqnref{eq:finite_bound_MFIM}.

\subsection{SYK Chain}

As a final example, we consider a 1D chain of doubled SYK models where fermions at different lattice sites are coupled by nearest-neighbor random interactions. Concretely, the Hamiltonian takes the form
\begin{equation}
    H_{\rm SYK \,chain} = H_{\uparrow} + H_{\downarrow}  \,,
\end{equation}
\begin{equation}
    H_{\alpha = \uparrow/\downarrow} = \sum_{ijkl=1}^N \sum_{x=1}^L J^{\alpha}_{0,ijkl, x} \chi^{\alpha}_{i,x} \chi^{\alpha}_{j,x} \chi^{\alpha}_{k,x} \chi^{\alpha}_{l,x} + \sum_{ijkl=1}^N \sum_{x=1}^L J^{\alpha}_{1,ijkl, x} \chi^{\alpha}_{i,x} \chi^{\alpha}_{j,x} \chi^{\alpha}_{k,x+1} \chi^{\alpha}_{l,x+1}
\end{equation}
where $J^{\alpha}_{0,ijkl,x}, J^{\alpha}_{1,ijkl,x}$ are i.i.d. Gaussian random variables with 
\begin{equation}
    \ev{J^{\alpha}_{0,ijkl,x}} = \ev{J^{\alpha}_{1,ijkl,x}} = 0 \,, \quad \ev{(J^{\alpha}_{0,ijkl,x})^2} = \frac{J_0^2}{N^3} \,, \quad \ev{(J^{\alpha}_{1,ijkl,x})^2} = \frac{J_1^2}{N^3} \,.
\end{equation}
Here, $x$ is a label for the spatial sites and $N$ is the total number of fermion flavors within each site. Importantly, we impose spatial locality for interactions between different lattice sites, but only k-locality for interactions within a lattice site. This setup allows us to test the scaling of entanglement growth rates with the onsite Hilbert space dimension $d_0 \sim 2^N$, the interaction strength $\bar h$ and the lattice size $L$. 

Remarkably, the growth of $\alpha$-Renyi entropies in the above 1D SYK chain has been solved analytically in the large $N,L$ limit in Ref.~\cite{Gu_Lucas_Qi_2017} for a special class of initial product states that we now define. Let $\ket{EPR}_x$ be a maximally entangled state between the fermions in the $\uparrow$ and $\downarrow$ family on a fixed spatial site $x$. Then $\ket{I} = \otimes_{x=1}^L \ket{EPR}_x$ is state with no spatial entanglement across any spatial cut and hence a good candidate for studying entanglement growth at infinite temperature. To vary the effective temperature, we can also generalize this state to a thermofield double state $\ket{TFD} = e^{-\frac{\beta}{4} (H_{\uparrow} + H_{\downarrow})}$, which is also spatially unentangled. For this modified setup, and within the perturbative limit $J_1 \ll \frac{J_0}{\sqrt{\beta J_0}}$, the authors of Ref.~\cite{Gu_Lucas_Qi_2017} then computed the $\alpha$-fold replicated partition function $\Tr \rho_A(t)^{\alpha}$ for a suitable initial density matrix and then performed an analytic continuation to real values of $\alpha$. Keeping only replica-diagonal contributions to $\Tr \rho_A(t)^{\alpha}$, the growth of $S_{A,\alpha}(t)$ for general $\alpha > 1$ is found to be
\begin{equation}\label{eq:example_SYKchain}
    S_{A,\alpha}(t) \approx \frac{\alpha}{\alpha-1} \frac{2\pi}{\beta} \frac{J_1^2 N}{8 \pi J_0^2} t \,.
\end{equation}
The singularity of this functional form as we approach the Von Neumann entropy limit $\alpha \rightarrow 1$ is clearly unphysical. To restore a physical answer for $\alpha = 1$ and for $\alpha < 1$, one would need to include replica off-diagonal contributions that cancel divergent parts of the diagonal solution. Since a full solution of $S_{A,1}(t)$ is not available, we will simply compare our bounds with the analytic results for $\alpha \neq 1$. 

One surprise in this model is that the exact entanglement velocity scales as $J_1^2$ at small $J_1$ which is much smaller than the bound proportional to $J_1$. 
One plausible reason for this scaling discrepancy is that the analytic result \eqnref{eq:example_SYKchain} holds only for initial states with $\beta \neq 0$, while the rigorous bounds hold for all $\beta$. As $\beta \rightarrow 0$, the analytic result breaks down and we conjecture that the scaling of entanglement growth rate would begin to track the linear bound. It would be interesting to extend the analytic results to these $\beta = 0$ states and test this conjecture. 

Another interesting feature is the linear $N$-scaling of $S'_{A,\alpha}(t)$. Recall that the SYK Hamiltonian on a single spatial site has an $\mathcal{O}(N)$ operator norm due to the choice of coupling distribution. The same spatial site has a Hilbert space dimension $d_0 = 2^{N}$. With these estimates, Theorem~\ref{thm:Bound} implies $|S'_{\alpha>1}(t)| \lesssim c N 2^{2N}$, while Corollary~\ref{cor:total_entgrowth} implies $|S_{\alpha}(t)| \lesssim c' N t$. Up to a multiplicative constant, the analytic cumulative entanglement growth $S_{\alpha}(t) - S_{\alpha}(0)$ saturates the bound in Corollary~\ref{cor:total_entgrowth}. However, the incremental entanglement rate $\frac{\alpha}{\alpha-1} \frac{2\pi}{\beta} \frac{J_1^2 N}{8\pi J_0^2}$ is exponentially smaller than the bound $c N 2^{2N}$. This separation of scale leaves us with two very different possibilities: (1) the bound for $\max_t |S_{\alpha > 1}'(t)|$ in Theorem~\ref{thm:Bound} is far from optimal and (2) the bound is tight, but the SYK chain Hamiltonian is too simple to saturate the bound. One reason why (2) may be plausible is that the onsite interactions of the SYK chain are generated by k-local sparse matrices that may not be able to produce entanglement at a maximal rate. To resolve this uncertainty, it would be useful to replace each SYK dot with a featureless dense GUE random matrix. For example, one toy model that one could consider is the 1D GUE-chain
\begin{equation}
    H_{\rm GUE-chain} = \sum_{x=1}^L M_x + \sum_{x=1}^{L-1} M_x M_{x+1} + h.c. \,,
\end{equation}
where each $M_x$ is drawn from an independent rank-$N$ GUE distribution. Given the solvability of entropy dynamics generated by a single random GUE Hamiltonian, there is some hope that certain properties of the GUE-chain can also be determined analytically. It would be interesting to see if $|S'_{\alpha \neq 1}(t)|$ in this model can get closer to the bound in Theorem~\ref{thm:Bound}. 

\section{Discussion}\label{sec:discussion}

The central motivation of this work is to explore similarities and differences between the dynamics of $\alpha$-Renyi entropies $S_{\alpha \neq 1}(t)$ and Von Neumann entropy $S_1(t)$, two of the most popular measures of bipartite entanglement in many-body quantum systems. 
The approach we take is to prove universal (state-independent) bounds (Lemma~\ref{lemma:alpha>1}, Theorem~\ref{thm:Bound}, Corollaries~\ref{cor:total_entgrowth}, \ref{cor:Bound_klocal}, \ref{cor:Bound_klocal+geolocal}) on the time derivatives of these entropy measures and compare how each of the bounds scales with the subsystem size, the interaction strength, the degree of locality in the Hamiltonian etc. The key observation we make is that this similarity between $S_1(t)$ and $S_{\alpha}(t)$ breaks down in general, although it appears to hold for many solvable models. Concretely, no matter which notion of locality we impose on the Hamiltonian, there is always a gap between the optimal bounds that we could prove for $\alpha > 1$ and for $\alpha = 1$\footnote{We emphasize that the tightness of these bounds is largely unknown, except in the case of completely non-local dynamics.}. For the bipartite entanglement between complementary subsystems $A$ and $\bar A$, these gaps can be summarized as follows:
\begin{enumerate}
    \item With no locality assumptions, the bound on $|S'_{1}(t)|$ scales linearly with the subsystem volume $|A|$, while the bound on $|S'_{\alpha\neq1}(t)$ scales exponentially with $|A|$. Moreover, this \textit{exponential separation can be saturated}. 
    \item With geometric quasi-locality (infinite-range interactions that decay with distance), a finite bound on $|S'_{\alpha}(t)|$ requires power-law local interactions for $\alpha = 1$ but superexponentially local interactions for $\alpha \neq 1$. 
    \item With strict geometric locality (finite-range interactions), the bound on $|S'_{\alpha}(t)|$ scales as a power law of the interaction-range $R$ for $\alpha = 1$ and as an exponential of $R^D$ for $\alpha \neq 1$. 
    \item With $k$-locality (and with/without geometric locality), the bound on $|S'_{\alpha}(t)|$ scales as a power law of $k$ for $\alpha = 1$ and as an exponential of $k$ for $\alpha \neq 1$. 
\end{enumerate}
If these bounds are tight, then they imply a qualitative distinction between the dynamics of $S_{\alpha}(t)$ and $S_1(t)$ that is in tension with the ``entanglement membrane" picture~\cite{Nahum_Ruhman_Vijay_Haah_2017,jonay2018_membrane,Zhou_Nahum_2019,ZhouNahum2020_membrane_chaotic}. For example, in the membrane picture, one associates a natural velocity scale $v_E$ to a general entanglement measure $E(t)$ via the equation
\begin{equation}
    \left|\frac{dE(t)}{dt} \right| = v_E |\partial A| s_{th}(\rho_0) \,,
\end{equation}
where $|\partial A|$ is the size of the boundary between $A, \bar A$ and $s_{th}(\rho_0)$ is the equilibrium entropy density associated with the initial state $\rho_0$. For generic non-integrable Hamiltonians with no global symmetry, this expectation is justified by approximating the local dynamics generated by the Hamiltonian with local random unitary circuits. For these circuits, $S_{\alpha}(t)$ can be mapped to the free energy of fluctuating domain walls on a spacetime lattice with length $t$ in the temporal direction. As $t$ increases, the free energy of the domain walls (and hence the entropies) increases linearly with $t$. Since different choices of $\alpha$ only modify the domain wall structures and their effective interactions, the basic linear-in-$t$ scaling should be universal, while the entanglement velocities $v_E(\alpha)$ could depend on $\alpha$~\cite{Zhou_Nahum_2019}. However, if our bounds are tight, we could design a geometrically quasi-local Hamiltonian for which $|S'_1(t)|$ is bounded for all initial states but $\max_t |S'_{\alpha \neq 1}(t)|$ can be arbitrarily large in the $|A| \rightarrow \infty$ limit. Then the notion of entanglement velocity makes sense for $S_1(t)$ but not for $S_{\alpha}(t)$ (in fact, for Von Neumann entropy, Theorem~\ref{thm:Bound} immediately implies an upper bound on the entanglement velocity $v_E \leq c \bar h$). Such a conclusion would challenge the membrane picture for $\alpha$-Renyi entropies in general\footnote{In systems with global energy/charge conservation, $S_{\alpha}(t)$ actually scales as $\sqrt{t}$ at large times due to energy/charge diffusion~\cite{Rakovszky_Pollmann_vonKeyserlingk_2019,Huang_2020} and $v_E(\alpha > 1)$ cannot be defined. This phenomenon already requires a modification of the membrane picture. However, since $|S'_{\alpha}(t)| \ll |S'_1(t)|$ at large $t$, examples in this class are even further from saturating our bounds.}.

The preceding discussion makes it clear that the tightness of our bounds is an important open question. In the case of non-local Hamiltonians, we were able to show that our bounds are almost saturated by random GUE dynamics. However, for geometrically local and k-local Hamiltonians, we have not found solvable models that exhibit the kind of exponential separation anticipated by the bounds. Can the bound for $|S_{\alpha}'(t)|$ be saturated? Or is the bound just an artifact of our proof techniques? One interesting family of models where these saturation questions can be explored are GUE-chains where independently random GUE Hamiltonians are placed on each site of a lattice and coupled by quasi-local interactions (ranging from power-law local to strictly finite-range). The solvability of entropy dynamics for a single GUE site may provide some analytic insights for tackling this more general family of models. On the flip side, one can ask if the bounds can be strengthened with additional assumptions on the structure of local terms or on the ensemble of initial states. In the proof of Corollary~\ref{cor:Bound_klocal+geolocal}, we have seen how a combination of $k$-locality and geometric locality leads to bounds on $|S'_{\alpha}(t)|$ that are stronger than Theorem~\ref{thm:Bound} for all values of $\alpha$. Perhaps this line of thinking can be pushed further with more sophisticated operator inequalities.

\section*{Acknowledgements}
    We thank Shankar Balasubramanian, Sarang Gopalakrishnan, Hong Liu, Daniel K. Mark, Daniel Ranard, Shreya Vardhan, and Romain Vasseur for helpful discussions and comments on the draft. We also thank Hong Liu and Shreya Vardhan for collaboration on related topics. ZDS is supported by the Jerome I. Friedman Fellowship Fund, as well as in part by the Department of Energy under grant DE-SC0008739.
    
\appendix

\nocite{apsrev41Control}
\bibliographystyle{apsrev4-1}
\bibliography{entbounds}

\end{document}